\def\max{\mathop{\rm max} \nolimits}
\def\mathdef{\mathop{\rm def} \nolimits}
\newcommand{\MBB}{\mathbb}
\newcommand{\PAR}{\bindnasrepma}
\newcommand{\TENS}{\otimes}
\def\linearimp{\mathop{- \hspace{-.025in} \circ}}
\newcommand{\LIMP}{\linearimp}
\newtheorem{lemma}{Lemma}[section]
\newtheorem{theorem}{Theorem}[section]
\newtheorem{definition}{Definition}[section]
\newtheorem{corollary}{Corollary}[section]
\newtheorem{proposition}{Proposition}[section]
\newtheorem{example}{Example}[section]
\newtheorem{remark}{Remark}
\newenvironment{proof}{\begin{flushleft}{\bf Proof:} \ \ }{\end{flushleft}}
\newenvironment{ack}{\begin{flushleft}{\bf Acknowledgments.} \ \ }{\end{flushleft}}
\title{Strong Typed B\"{o}hm Theorem and Functional Completeness on the Linear Lambda Calculus}
\author{Satoshi Matsuoka\\
\institute{National Institute of Advanced Industrial Science and Technology (AIST),}
\institute{1-1-1 Umezono, Tsukuba, Ibaraki, 305-8565 Japan}
\email{matsuoka@ni.aist.go.jp}
  }
\begin{document}
\maketitle
\begin{abstract}
In this paper, we prove a version of the typed B\"{o}hm theorem on the linear lambda calculus,
which says, for any given types $A$ and $B$,
when two different closed terms $s_1$ and $s_2$ of $A$ and 
any closed terms $u_1$ and $u_2$ of $B$ are given, 
there is a term $t$ such that
$t \, s_1$ is convertible to $u_1$ and $t \, s_2$ is convertible to $u_2$.
Several years ago, a weaker version of this theorem was proved, but the stronger version was open. 
As a corollary of this theorem, we prove that 
if $A$ has two different closed terms $s_1$ and $s_2$, then 
$A$ is functionally complete with regard to $s_1$ and $s_2$.
So far, it was only known that a few types are functionally complete.
\end{abstract}
\section{Introduction}
This paper is an addendum to the paper \cite{Mat07}, which was published several years ago.
The previous paper establishes the following result in the linear $\lambda$-calculus:
\begin{quote}
For any type $A$ and two different closed terms $s_1$ and $s_2$ of type $A$, 
there is a term $t$ such that 
\[
t s_1 =_{\beta \eta {\rm c}} \underline{0} \quad \mbox{and} \quad t s_2 =_{\beta \eta {\rm c}} \underline{1} \, ,
\]
where $\underline{0} \equiv_{\mathdef} \lambda x. \lambda f. \lambda g. f(g(x))$ and $\underline{1} \equiv_{\mathdef} \lambda x. \lambda f. \lambda g. g(f(x))$. 
\end{quote}
In \cite{Mat07}, the proof net notation for the intuitionistic multiplicative linear logic (for short, IMLL) was used, but as shown later,
the linear $\lambda$-calculus can be regarded as a subsystem of 
IMLL proof nets.
In addition the equality $=_{\beta \eta {\rm c}}$ will be defined precisely later. 
In this paper, we prove a stronger version of the previous statement, which is stated as follows:
\begin{quote}
  For any given types $A$ and $B$,
  when two different closed terms $s_1$ and $s_2$ of $A$ and 
  any closed terms $u_1$ and $u_2$ of $B$ are given, 
  there is a term $t$ such that 
\[
t \, s_1 =_{\beta \eta {\rm c}} u_1 \quad \mbox{and} \quad t \, s_2 =_{\beta \eta {\rm c}} u_2 \, .
\]
\end{quote}
The stronger version was an open question in \cite{Mat07}.
Note that the strong version is trivially derived from the weak one in the simply typed $\lambda$-calculus, because
the calculus allows discard and copy of variables freely.
But the linear $\lambda$-calculus officially does not allow these two operations. 
So some technical devices are required. 
The basic idea of our solution is 
to extend the typability by a linear implicational formula $A \LIMP B$ to a more liberalized form.
We call the extended typability {\it poly-typability},
which is a mathematical formulation of the typing discipline used in \cite{Mai04}. 
Thanks to the extension, we can prove 
Projection Lemma (Lemma~\ref{lemmaProjectionLemma}) and Constant Function Lemma (Lemma~\ref{lemmaConstantFunctionLemma}),
which are the keys to establish our typed B\"{o}hm theorem. 

One application is the functional completeness problem of the linear $\lambda$-calculus.
It raises the question about the possibility of Boolean representability in the linear $\lambda$-calculus.
We prove that any type with at least two different closed terms is functionally complete.
This means that any two-valued functions can be represented over these two terms. 
So far, it was only known that a few types have this property.
Our functional completeness theorem liberalizes us from sticking to specific types.
This situation is analogous to that of the degree of freedom about a base choice in linear algebra: linear independence is enough. 
Similarly we may choose any different two terms of any type in order to establish the functional completeness. 

The strong typed B\"{o}hm theorem gives a general construction of linear $\lambda$-terms that satisfy a given specification for inputs and outputs. 
It is expected that useful theorems about linear $\lambda$-terms will be proved by using the theorem further.

\paragraph{Comparison with the case of the simply typed lambda calculus}
The first proof of the typed B\"{o}hm theorem for the simply typed lambda calculus
was given in \cite{Sta83}.
The proof is based on the {\it reducibility theorem} in \cite{Sta80} (see also Theorem 3.4.8 in \cite{BDS13}).
Our proof proceeds in a similar manner to Statman's proof.
But the proof of the reducibility theorem is rather complicated,
since it uses different operations.
On the other hand, 
the proof of our analogue, which is
Proposition~\ref{propLDT-basic}, is much simpler,
because our proof is based on one simple principle, i.e., linear distributive law (see, e.g.,  \cite{BS04})
\footnote{For example, this principle includes
  $(((A \LIMP B) \LIMP C) \LIMP D) \LIMP (A \LIMP (B \LIMP C) \LIMP D)$,
  $((A \LIMP B) \TENS C) \LIMP (A \LIMP (B \TENS C))$,
  and
  $((A \LIMP B \TENS C) \LIMP D) \LIMP (B \LIMP (A \LIMP C) \LIMP D)$.
  This observation was the starting point of Proposition~\ref{propLDT-basic}.}:
\[
((A \PAR B) \TENS C) \LIMP (A \PAR (B \TENS C))
\]
On the other hand, while the final separation argument of Statman's proof only uses type instantiation,
our proof of Theorem~\ref{thmStrongTypedBohmTheorem} needs the notion of poly-types. 

\section{Typing Rules, Reduction Rules, and an Equational Theory}
In this section we give our type assignment system for the linear $\lambda$-calculus and discuss 
some reduction rules and equivalence relations on the typed terms of the system.
Our system is based on the natural deduction calculus given in \cite{Tro92},
which is equivalent to the system based on the sequent calculus or proof nets in \cite{Gir87}
(e.g., see \cite{Tro92}). 
Our notation is the same as that in \cite{Mai04}:
the reader can confirm our results using an implementation of Standard ML \cite{MTHM97}.
\paragraph{Types} 
\[ \mbox{\tt A} ::= \mbox{\tt 'a} \, \, \, \, | \, \, \, \, \mbox{\tt A1*A2} \, \, \, \, | \, \, \, \, \mbox{\tt A1->A2} \]
The symbol {\tt 'a} stands for a type variable.
On the other hand $\mbox{\tt A1*A2}$ stands for the tensor product $\mbox{\tt A1} \TENS \mbox{\tt A2}$ and $\mbox{\tt A1->A2}$ for the linear implication $\mbox{\tt A1} \LIMP \mbox{\tt A2}$
in the usual notation.
\paragraph{Terms}
We use {\tt x,y,z} for term variables 
and {\tt r,s,t,u,v,w} for general terms.
\paragraph{Linear Typing Contexts}
A linear typing context is a finite list of pairs $\mbox{\tt x:A}$ such that 
each variable occurs in the list once.
Usually we use Greek letters $\Gamma, \Delta, \ldots$ to denote linear typing contexts.
\paragraph{Type Assignment System}
\[
\frac{}{\mbox{\tt x:A} \vdash \mbox{\tt x:A}}
\quad \quad
\frac{\Gamma, \mbox{\tt x:A}, \mbox{\tt y:B}, \Delta \vdash \mbox{\tt t:C}}
{\Gamma, \mbox{\tt y:B}, \mbox{\tt x:A}, \Delta \vdash \mbox{\tt t:C}}
\]
\[
\frac{\mbox{\tt x:A}, \Gamma \vdash \mbox{\tt t:B}}
{\Gamma \vdash \mbox{\tt fn x=>t:A->B}}
\quad \quad 
\frac
{ \Gamma \vdash \mbox{\tt t} : \mbox{\tt A->B} \quad \Delta \vdash \mbox{\tt s:A} }
{ \Gamma, \Delta \vdash \mbox{\tt t} \, \mbox{\tt s} \mbox{\tt:B}}
\]
\[
\frac
{ \Gamma \vdash \mbox{\tt s:A} \quad \Delta \vdash \mbox{\tt t:B} }
{ \Gamma, \Delta \vdash \mbox{\tt (s,t):A*B}}
\quad \quad 
\frac
{ \Gamma \vdash \mbox{\tt s:A*B}
\quad \mbox{\tt x:A}, \mbox{\tt y:B}, \Delta \vdash \mbox{\tt t:C} }
{ \Gamma, \Delta \vdash \mbox{\tt let val (x,y)=s in t end:C} }
\]
In addition we assume that
for each term variable, 
if an occurrence of the variable appears in a sequent in a term derivation, then 
the number of the occurrences in the sequent is exactly two. 
For a term {\tt t} the set of bound variables ${\rm BV}(\mbox{\tt t})$ is defined recursively as follows:
\begin{itemize}
  \item ${\rm BV}(\mbox{\tt x}) = \emptyset$, 
  \item ${\rm BV}(\mbox{\tt s}  \, \, \mbox{\tt t})
  =
  {\rm BV}(\mbox{\tt (s,t)})
  = 
  {\rm BV}(\mbox{\tt t}) 
  \cup
  {\rm BV}(\mbox{\tt s})$, 
  \item ${\rm BV}(\mbox{\tt fn x=>t}) = \{ \mbox{\tt x} \} \cup {\rm BV}(\mbox{\tt t})$, 
  \item ${\rm BV}(\mbox{\tt let val (x,y)=s in t end})
  =
  \{ \mbox{\tt x}, \mbox{\tt y} \} \cup {\rm BV}(\mbox{\tt s}) \cup {\rm BV}(\mbox{\tt t})$. 
\end{itemize}
The set of free variables of {\tt t}, denoted by ${\rm FV}(\mbox{\tt t})$  is the complement of the set of variables in {\tt t} with respect to ${\rm BV}(\mbox{\tt t})$. 
The function declaration \\
\ \ \ \ \ \ \ \ {\tt fun f x1 x2 $\cdots$ xn = t} \\
is interpreted as the following term: \\ 
\ \ \ \ \ \ \ \ {\tt f = fn x1=>fn x2=> $\cdots$ =>fn xn=>t} \\
Below we consider only closed terms (i.e. combinators) \\
\ \ \ \ \ \ \ \ {\tt $\vdash$ t:A}. 
\paragraph{Term Reduction Rules}
Two of our reduction rules are\\
\ \ \ \ \ \ \ \ ($\beta_1$): {\tt (fn x=>t)s} \ \ $\Rightarrow_{\beta_1}$ \ \  {\tt t[s/x]} \\
\ \ \ \ \ \ \ \ ($\beta_2$): {\tt let val (x,y)=(u,v) in w end} \ \ $\Rightarrow_{\beta_2}$ \ \  {\tt w[u/x,v/y]} \\
Then note that if a function {\tt f} is defined by \\ 
\ \ \ \ \ \ \ \ {\tt fun f x1 x2 $\cdots$ xn = t}\\ 
and \\
\ \ \ \ \ \ \ \ {\tt x1:A1,...,xn:An|-t:B}, $\, \, \, \,$ {\tt |-t1:A1}, $\, \,$ $\ldots$, $\, \, \, \, $ {\tt |-tn:An} \\
then, we have \\
\ \ \ \ \ \ \ \ {\tt f t1 $\cdots$ tn} \ \ $\Rightarrow_{\beta_1}^{\ast}$ \ \ {\tt t[t1/x1,$\ldots$,tn/xn]} .\\
We denote the reflexive transitive closure of a relation $R$ by $R^{\ast}$. 
In the following $\to_{\beta}$ denotes the congruent (one-step reduction) relation generated by the two reduction rules above and the following contexts:
\begin{eqnarray*}
  C[] & = & [ ] \, \, \bigm| \, \, C[] \, \mbox{\tt t} \, \, \bigm| \, \, \mbox{\tt t} \, C[] \, \,
  \bigm| \, \, \mbox{\tt (t,} C[] \mbox{\tt )} \, \, \bigm| \, \, \mbox{\tt (} C[] \mbox{\tt , t)} 
  \, \, \bigm| \, \, \mbox{\tt fn} \, \, \mbox{\tt x} \mbox{\tt =>} \, C[] \\
& & \, \,
\, \, \bigm| \, \, \mbox{\tt let} \, \,  \mbox{\tt val} \, \, \mbox{\tt (x, y) =} \, \, C[] \, \, \mbox{\tt in} \, \,  \mbox{\tt t} \, \, \mbox{\tt end}
\, \, \bigm| \, \, \mbox{\tt let} \, \,  \mbox{\tt val} \, \, \mbox{\tt (x, y) = t} \, \,  \mbox{\tt in} \, \, C[] \, \, \mbox{\tt end} 
\end{eqnarray*}
We define the set of variables captured by a context $C[]$, denoted by ${\rm CV}(C[])$ recursively:
\begin{itemize}
\item ${\rm CV}([]) = \emptyset$,
\item ${\rm CV}(C[] \, \, \mbox{\tt t})
  = {\rm CV}(\mbox{\tt t} \, \, C[])
  = {\rm CV}(\mbox{\tt (t,} C[] \mbox{\tt )})
  = {\rm CV}(\mbox{\tt (} C[] \mbox{\tt ,} \, \mbox{\tt t)})
  = {\rm CV}(C[])$,
\item ${\rm CV}(\mbox{\tt fn} \, \, \mbox{\tt x} \mbox{\tt =>} \, C[]) = \{ \mbox{\tt x} \} \cup {\rm CV}(C[])$,
\item ${\rm CV}(\mbox{\tt let val (x,y) =} \, \, C[] \, \, \mbox{\tt in} \, \,  \mbox{\tt t} \, \, \mbox{\tt end})
  = {\rm CV}(C[])$,
\item ${\rm CV}(\mbox{\tt let val (x,y) = t in} \, \, C[] \, \, \mbox{\tt end} )
  = \{ \mbox{\tt x}, \mbox{\tt y} \} \cup {\rm CV}(C[])$.
\end{itemize}
The set of free variables of a context $C[]$, denoted by ${\rm FV}(C[])$ is defined similarly to that of a term {\tt t}.

In order to establish a full and faithful embedding from linear $\lambda$-terms into IMLL proof nets, 
we introduce further reduction rules.
Basically we follow \cite{MRA93}, but note that
a simpler presentation is given than that of \cite{MRA93}, following a suggestion of an anonymous referee. 
The following are $\eta$-rules:\\
\ \ \ \ \ \ \ \ ($\eta_1$): {\tt fn x=>(t x)} \ \ $\Rightarrow_{\eta_1}$ \ \  {\tt t} \\
\ \ \ \ \ \ \ \ ($\eta_2$): {\tt let val (x,y) = t in (x,y)} \ \ $\Rightarrow_{\eta_2}$ \ \ {\tt t} \\
In the following $\to_{\beta \eta}$ denotes the congruent (one-step reduction) relation generated by the four reduction rules above and any context $C[]$.
But these reduction rules are not enough: 
different normal terms may correspond to the same normal IMLL proof net. 
In order to make further identification we introduce the following commutative conversion rule. 
Then we define the commutative conversion relation $\leftrightarrow_{\rm c}$:
\[
\begin{array}{l}
  C[  \mbox{\tt let} \, \,  \mbox{\tt val} \, \,  \mbox{\tt (x,y)=t} \, \,  \mbox{\tt in} \, \,  \mbox{\tt u} \, \,  \mbox{\tt end}] \, \, \leftrightarrow_{\rm c} \, \, \mbox{\tt let} \, \,  \mbox{\tt val} \, \,  \mbox{\tt (x,y)=t} \, \,  \mbox{\tt in} \, \,  C[ \mbox{\tt u} ] \, \, \mbox{\tt end} \\
  \mbox{where} \, \, {\rm FV}(C[]) \cap \{ \mbox{\tt x} , \mbox{\tt y}  \} = \emptyset
  \, \, \mbox{and} \, \, {\rm CV}(C[]) \cap {\rm FV}({\tt t}) = \emptyset
  \end{array}
\]
Let $=_{\rm c}$ be the congruent equivalence relation generated by $\leftrightarrow_{\rm c}$ and any context $C[]$. 
Then we define $\to_{\beta \eta {\rm c}}$ as the least relation satisfying the following rule:
\[
\frac{\mbox{\tt t} =_{\rm c} \mbox{\tt t'} \quad \mbox{\tt t'} \to_{\beta \eta} \mbox{\tt u'} \quad \mbox{\tt u'} =_{\rm c} \mbox{\tt u}}{\mbox{\tt t} \to_{\beta \eta {\rm c}} \mbox{\tt u}}
\]
Then the following holds.
\begin{proposition}[Church Rosser\cite{MRA93}]
if $\mbox{\tt t} \to_{\beta \eta {\rm c}} \mbox{\tt t'}$ and $\mbox{\tt t} \to_{\beta \eta {\rm c}} \mbox{\tt u'}$ then 
for some $\mbox{\tt w} =_{\rm c} \mbox{\tt w'}$, $\mbox{\tt t'} \to_{\beta \eta {\rm c}} \mbox{\tt w}$ and $\mbox{\tt u'} \to_{\beta \eta {\rm c}} \mbox{\tt w'}$. 
\end{proposition}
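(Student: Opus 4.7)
The plan is to reduce this confluence question to the Church--Rosser property for cut elimination in IMLL proof nets. The guiding intuition is that the commutative conversion $=_{\rm c}$ is precisely the congruence identifying linear $\lambda$-terms that denote the same proof net; once this is established, confluence modulo $=_{\rm c}$ on terms is pulled back from ordinary confluence on nets.

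First, I would fix a translation $(-)^{\bullet}$ from well-typed linear $\lambda$-terms to IMLL proof nets (following \cite{MRA93}), sending application to an axiom-link plus a cut, $\TENS$-introduction to a $\TENS$-link, and the {\tt let val (x,y)=s in t end} construct to a $\PAR$-link feeding the context \mbox{\tt t}. Next I would verify two central lemmas: (i) if $\mbox{\tt t} \to_{\beta \eta} \mbox{\tt u}$ then $\mbox{\tt t}^{\bullet}$ rewrites to $\mbox{\tt u}^{\bullet}$ by a $\beta$-cut elimination step (for $\beta_1, \beta_2$) or a standard $\eta$-expansion step; (ii) $\mbox{\tt t} =_{\rm c} \mbox{\tt t'}$ if and only if $\mbox{\tt t}^{\bullet} = \mbox{\tt t'}^{\bullet}$. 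The forward direction of (ii) is routine, since the commutative conversion only slides a let-binder across a context whose free variables are disjoint from those of the pair and of {\tt t}, so no wires in the net are affected. The backward direction is more delicate: one reads back a canonical term from a net by a fixed depth-first traversal of the $\PAR$-links, and shows that any other term with the same net differs only by $\leftrightarrow_{\rm c}$ moves rearranging the nesting order of let-bindings.

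With (i) and (ii) in hand, confluence follows by a standard diagram chase. Given $\mbox{\tt t} \to_{\beta \eta {\rm c}}^{*} \mbox{\tt t'}$ and $\mbox{\tt t} \to_{\beta \eta {\rm c}}^{*} \mbox{\tt u'}$, applying $(-)^{\bullet}$ and invoking (i), (ii) produces proof-net reductions $\mbox{\tt t}^{\bullet} \to^{*} \mbox{\tt t'}^{\bullet}$ and $\mbox{\tt t}^{\bullet} \to^{*} \mbox{\tt u'}^{\bullet}$. By Church--Rosser for IMLL proof nets, these converge on a common net $N$. Reading back gives terms $\mbox{\tt w}, \mbox{\tt w'}$ with $\mbox{\tt w}^{\bullet} = \mbox{\tt w'}^{\bullet} = N$, whence $\mbox{\tt w} =_{\rm c} \mbox{\tt w'}$ by (ii). Finally one lifts each individual net reduction step from $\mbox{\tt t'}^{\bullet}$ down to $N$ back to a $\to_{\beta \eta {\rm c}}$ step: choosing any term realizing the source net, (i) provides a corresponding $\to_{\beta \eta}$ step, and the resulting term realizes the target net up to $=_{\rm c}$. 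Concatenating these steps yields $\mbox{\tt t'} \to_{\beta \eta {\rm c}}^{*} \mbox{\tt w}$ and $\mbox{\tt u'} \to_{\beta \eta {\rm c}}^{*} \mbox{\tt w'}$.

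The main obstacle is the lifting in the last step, together with the backward direction of (ii). Concretely, when a redex in a proof net straddles a $\PAR$-link corresponding to a let-binding, the source term may have the let-binder located far from the redex; firing the net reduction corresponds on the term side to first applying commutative conversions to bring the let-binder adjacent to the redex, then performing a $\beta$ or $\eta$ step. Making this bookkeeping precise, and in particular verifying that the free-variable side conditions on $\leftrightarrow_{\rm c}$ are always satisfied at each intermediate step, is the crux; cut-elimination confluence for the IMLL net calculus itself is then entirely standard.
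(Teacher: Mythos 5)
The paper does not actually prove this proposition: it is imported from \cite{MRA93} and stated without proof, so there is no in-paper argument to compare your attempt against. Judged on its own, your proof-net strategy is close in spirit to the embedding the paper develops in Appendix~\ref{appIMLLProofNets}, but as written it has genuine gaps. First, your simulation lemma (i) fails for the $\eta$-rules: the term calculus uses $\eta$-\emph{contraction} ($\mbox{\tt fn x=>(t x)} \Rightarrow_{\eta_1} \mbox{\tt t}$), whereas $\eta$ on IMLL proof nets goes in the \emph{expansion} direction, and the paper itself remarks that $\eta$-normal forms are natural for terms while $\eta$-long normal forms are natural for nets. Hence an $\eta_1$-step on terms is not simulated by a net reduction from $\mbox{\tt t}^{\bullet}$ to $\mbox{\tt u}^{\bullet}$; at best the two nets share a normal form. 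You can repair this by weakening (i) to ``each $\to_{\beta \eta {\rm c}}$ step preserves the normal form of the translated net'' (this is Proposition~\ref{propTransPreservBetaEta-C}), but then your diagram chase no longer produces actual reduction sequences on nets that could be lifted back step by step.

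Second, and more seriously, the two points you yourself flag as the crux --- the backward direction of (ii) and the lifting of net reductions to term reductions --- are exactly where all the work lies, and they are not carried out. To conclude $\mbox{\tt w} =_{\rm c} \mbox{\tt w'}$ from $\mbox{\tt w}^{\bullet} = \mbox{\tt w'}^{\bullet}$ you need that any two $\beta\eta$-normal terms denoting the same net differ only by commutative conversions; establishing this without already having uniqueness of normal forms (a consequence of the very confluence you are proving) requires a direct syntactic standardization argument on let-nesting, and the risk of circularity is real. A self-contained proof would more plausibly proceed by a Tait--Martin-L\"{o}f parallel-reduction argument modulo $=_{\rm c}$, or by local confluence of $\to_{\beta \eta}$ modulo the commutative conversions combined with strong normalization (Newman's lemma modulo an equivalence), which is presumably the route taken in \cite{MRA93}. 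So the proposal is a reasonable plan, but in its current form it defers rather than closes the essential steps.
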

Furthermore we can easily prove that $\to_{\beta \eta {\rm c}}$ is strong normalizable as shown in \cite{MRA93}. 
We can conclude that we have the uniqueness property for normal forms under $\to_{\beta \eta {\rm c}}$ up to $=_{\rm c}$. 
\paragraph{Equality Rules}
Next we define our fundamental equality $=_{\beta \eta {\rm c}}$, which is given in \cite{MRA93} implicitly. 
The equality $=_{\beta \eta {\rm c}}$ is the smallest relation satisfying the following rules of the three groups:\\
(Relation Group)
\[
{\rm (Refl)}
\frac{\Gamma \vdash \mbox{\tt t:A}}
{\Gamma \vdash \mbox{\tt t}= \mbox{\tt t:A}}
\quad
{\rm (Sym)}
\frac{\Gamma \vdash \mbox{\tt t}= \mbox{\tt s:A}}
{\Gamma \vdash \mbox{\tt s}=\mbox{\tt t:A}}
\quad
{\rm (Trans)}
\frac{\Gamma \vdash \mbox{\tt t}= \mbox{\tt s:A} \quad \Gamma \vdash \mbox{\tt s}=\mbox{\tt u:A}}
{\Gamma \vdash \mbox{\tt t}=\mbox{\tt u:A}}
\]
(Reduction Group)
\[
({\rm Eq} {\rm c})
\frac{\Gamma \vdash \mbox{\tt t:A} \quad \mbox{\tt t} \, \leftrightarrow_{{\rm c}} \, \mbox{\tt t'}}
{\Gamma \vdash \mbox{\tt t} = \mbox{\tt t':A}}
\quad \quad
({\rm Eq} \beta \eta)
\frac{\Gamma \vdash \mbox{\tt t:A} \quad \mbox{\tt t} \, \rightarrow_{\beta \eta {\rm c}} \, \mbox{\tt t'}}
{\Gamma \vdash \mbox{\tt t} = \mbox{\tt t':A}}
\]
(Congruence Group)
\[
({\rm Eq} \lambda)
\frac{\mbox{\tt x:A}, \Gamma \vdash \mbox{\tt t} = \mbox{\tt t':B}}
{\Gamma \vdash \mbox{\tt fn x=>t} = \mbox{\tt fn x=>t':A->B}}
\]
\[
({\rm Eq} \, {\rm ap})
\frac
{ \Gamma \vdash \mbox{\tt t} = \mbox{\tt t':A->B}
\quad \Delta \vdash \mbox{\tt s} = \mbox{\tt s':A}}
{\Gamma, \Delta \vdash \mbox{\tt t} \, \mbox{\tt s} = \mbox{\tt t'} \, \mbox{\tt s'} \mbox{\tt :B}}
\]
\[
({\rm Eq} \, {\rm tup})
\frac
{ \Gamma \vdash \mbox{\tt s} = \mbox{\tt s':A} \quad \Delta \vdash \mbox{\tt t} = \mbox{\tt t':B} }
{ \Gamma, \Delta \vdash \mbox{\tt (s,t)} = \mbox{\tt (s',t'):A*B}}
\]
\[
({\rm Eq} \, {\rm let})
\frac
{ \Gamma \vdash \mbox{\tt s} = \mbox{\tt s':A*B}
\quad \mbox{\tt x:A}, \mbox{\tt y:B}, \Delta \vdash \mbox{\tt t} = \mbox{\tt t':C} }
{ \Gamma, \Delta \vdash \mbox{\tt let val (x,y)=s in t end} = \mbox{\tt let val (x,y)=s' in t' end:C}}
\]
\paragraph{The relationship between linear $\lambda$ terms and IMLL proof nets}
We can prove
the existence of a full and faithful embedding from 
the equivalence classes
of linear $\lambda$-terms up to $=_{\beta \eta {\rm c}}$
into the set of normal IMLL proof nets in the sense of \cite{Mat07}.
The proof is given in Appendix~\ref{appIMLLProofNets} with a brief introduction to IMLL proof nets.

\section{The Linear Distributive Transformation}
In this section we recall some definitions and results in \cite{Mat07}.
In \cite{Mat07}, most results are given by IMLL proof nets, not by the linear $\lambda$-calculus. 
But we have already given a full and faithful embedding from linear $\lambda$-terms to IMLL proof nets. 
So those results can be used for the linear $\lambda$-calculus freely. 
\begin{definition}
A linear $\lambda$-term ${\tt t}$ is implicational if 
there are neither ${\tt let}$ constructors nor $\mbox{\tt (} -\mbox{\tt ,} -\mbox{\tt )}$ constructors in ${\tt t}$. \\
A type ${\tt A}$ is implicational if there are no ${\tt A1 * A2}$ tensor subformulas in ${\tt A}$.
The order of an implicational formula ${\tt A}$, ${\rm order}({\tt A})$ is defined inductively as follows:
\begin{enumerate}
\item ${\tt A}$ is a propositional variable $\mbox{\tt 'a}$, then ${\rm order}({\tt A}) = 1$.
\item ${\tt A}$ is ${\tt A1} \mbox{\tt ->} \cdots \mbox{\tt ->} {\tt An} \mbox{\tt ->} \mbox{\tt 'a}$, then 
${\rm order}({\tt A})$ is
\[
\max \{ {\rm order}({\tt A1}), \ldots {\rm order}({\tt An}) \} \} + 1 
\]
\end{enumerate}
\end{definition}
The following proposition is the linear lambda calculus version of Corollary 2 in \cite{Mat07},
which says that
any different two terms of a type can be mapped 
into different two terms of another (but possibly the same) type with lower order (more precisely, less than $4$) without any tensor connectives injectively.
The purpose is to transform given terms into terms that can be treated easily. 
\begin{proposition}[Linear Distributive Transformation]
\label{propLDT-basic}
Let ${\tt A}$ be a type and ${\tt s1}$ and ${\tt s2}$ be two different closed terms of ${\tt A}$ up to $=_{\beta \eta {\rm c}}$.
Then there is a linear $\lambda$-term ${\tt LDTr\_A}$ such that 
${\tt LDTr\_A} \,  \, {\tt s1} \neq_{\beta \eta {\rm c}} {\tt LDTr\_A} \,  \, {\tt s2}$ 
and both ${\tt LDTr\_A} \,  \, {\tt s1}$ and ${\tt LDTr\_A} \,  \, {\tt s2}$ are a closed term of
an implicational type ${\tt A0}$ whose order is less than four. 
\end{proposition}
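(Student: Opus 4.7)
The plan is to construct ${\tt LDTr\_A}$ by induction on ${\tt A}$, repeatedly applying the linear distributivity isomorphism $((A \PAR B) \TENS C) \LQUIV (A \PAR (B \TENS C))$ together with the three variants listed in the footnote. Each such rewriting step is witnessed by a combinator with a two-sided inverse, so the composite will be a linear $\lambda$-term of type ${\tt A} \LIMP {\tt A0}$ that is invertible up to $=_{\beta \eta {\rm c}}$. Invertibility delivers injectivity at once, and hence the required separation ${\tt LDTr\_A} \, {\tt s1} \neq_{\beta \eta {\rm c}} {\tt LDTr\_A} \, {\tt s2}$.

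First I would isolate two elementary moves. Move $(\mathrm{T})$ uses $((A \LIMP B) \TENS C) \LIMP (A \LIMP (B \TENS C))$ and its symmetric variant to push each tensor constructor outward through surrounding implications until it sits at the head of the type, where it can be eliminated by currying; this shrinks the tensor count. Move $(\mathrm{O})$ uses $(((A \LIMP B) \LIMP C) \LIMP D) \LIMP (A \LIMP (B \LIMP C) \LIMP D)$ to flatten an argument of an implication whose own codomain is an implication, lowering the order of the type by one. For each move I would write the explicit linear $\lambda$-term witnessing the forward direction and check that a backward term exists whose composition with the forward one equals the identity modulo $=_{\beta \eta {\rm c}}$; this last verification is short but genuinely uses the commutative conversion $\leftrightarrow_{\rm c}$ on ${\tt let}$, since the backward moves typically rearrange ${\tt let}$-scopes past $\lambda$-abstractions.

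Next I would set up a well-founded measure on types, say the lexicographic pair (order, total number of tensor occurrences), and show that every type not already implicational of order at most $3$ admits an applicable move strictly decreasing this measure. Well-founded induction then produces a normal form ${\tt A0}$ that is implicational and of order $< 4$. The term ${\tt LDTr\_A}$ is defined as the composition of the forward moves along this reduction sequence; its inverse is the reverse composition of backward moves. Injectivity follows by applying the inverse to both sides of a hypothetical equation ${\tt LDTr\_A} \, {\tt s1} =_{\beta \eta {\rm c}} {\tt LDTr\_A} \, {\tt s2}$ and simplifying.

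The main obstacle is verifying that the measure really decreases uniformly — especially that tensors appearing deep inside the left-hand side of implications can still be pushed out rather than blocking further rewriting, and that the order bound of exactly $4$ is tight. I would discharge this by transporting the argument through the full and faithful embedding of Section 2 into IMLL proof nets, where linear distributivity is a local topological rearrangement whose invertibility is manifest. This is essentially the argument already carried out for Corollary 2 of \cite{Mat07}; the present statement then follows by translating the proof-net construction back to the linear $\lambda$-calculus termwise along the embedding.
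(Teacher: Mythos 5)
There is a genuine gap, and it sits at the very first step of your plan. You assert that each linear-distributivity rewriting ``is witnessed by a combinator with a two-sided inverse,'' and you derive the separation ${\tt LDTr\_A}\,{\tt s1} \neq_{\beta\eta{\rm c}} {\tt LDTr\_A}\,{\tt s2}$ from that invertibility. But the linear distributive law is precisely the canonical example of a \emph{non-invertible} entailment: $((A \LIMP B) \TENS C) \LIMP (A \LIMP (B \TENS C))$ has no converse in IMLL (the sequent $A \LIMP (B \TENS C) \vdash (A \LIMP B) \TENS C$ is not derivable, since the right-hand tensor cannot be split), and likewise for the order-lowering variant $(((A \LIMP B) \LIMP C) \LIMP D) \LIMP (A \LIMP (B \LIMP C) \LIMP D)$. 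Only the currying steps are genuine isomorphisms. Since the source type is not even a retract of the target along these maps, there is no inverse (two-sided or one-sided) to apply to a hypothetical equation ${\tt LDTr\_A}\,{\tt s1} =_{\beta\eta{\rm c}} {\tt LDTr\_A}\,{\tt s2}$, and your injectivity argument collapses. This is not a repairable detail: the entire difficulty of the result is that a non-invertible type transformation nevertheless induces an \emph{injective} map on closed terms (equivalently, on normal proof nets), and that has to be proved by directly analysing how the transformation acts on normal forms.

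Note also that the present paper does not prove this proposition at all; it imports it as the linear-$\lambda$-calculus reading of Corollary~2 of \cite{Mat07}, via the full and faithful embedding into IMLL proof nets. Your closing paragraph gestures at exactly that source, but treats the proof-net argument as a verification of the termination measure, whereas it is where the injectivity is actually established. If you want a self-contained proof, the work you must do is: for each elementary move, show that distinct normal inhabitants of the source type are sent to distinct normal inhabitants of the target type (e.g.\ by exhibiting how the normal form of the image determines the normal form of the preimage), and only then compose along your well-founded measure. The measure-decrease argument you sketch is plausible and secondary; the invertibility claim is the part that is false as stated.
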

After obtaining two different closed terms
${\tt LDTr\_A} \, \, \mbox{\tt s1}$ and ${\tt LDTr\_A} \, \, \mbox{\tt s2}$
of the same implicational type $\mbox{\tt A0}$ with order less than four using the proposition,
we apply a term $\mbox {\tt s'}$ with poly-type $\mbox{\tt A0} \mbox{\tt ->} \mbox{\tt B}$, which is defined in the next section, and we obtain 
\[
\mbox{\tt s'} \, \, \mbox{\tt (} \mbox{\tt LDTr\_A} \, \, \mbox{\tt s1} \mbox{\tt )} =_{\beta \eta} \mbox{\tt t1}
\quad \quad
\mbox{and}
\quad \quad
\mbox{\tt s'} \, \, \mbox{\tt (} \mbox{\tt LDTr\_A} \, \, \mbox{\tt s2} \mbox{\tt )} =_{\beta \eta} \mbox{\tt t2}
\]
such that 
two closed terms $\mbox{\tt t1}$ and $\mbox{\tt t2}$ of type $\mbox{\tt B}$
are outputs of the intended specification.
This is an overview of our proof of Theorem~\ref{thmStrongTypedBohmTheorem}(Strong Typed B\"{o}hm Theorem).
In order to construct the term $\mbox {\tt s'}$, it is convenient to introduce a simple notion of model theory.
\begin{definition}[The Second-order Linear Term System]
(1) The language:
\begin{enumerate}
\item [(a)] A denumerable set of variables ${\rm Var}$: Elements of ${\rm Var}$ are denoted by $x_1,x_2, \ldots$. 
\item [(b)] A denumerable set of second-order variables ${\rm SVar}$: 
Elements of ${\rm SVar}$ are denoted by $G_1, G_2, \ldots$.
Each element of $G$ of ${\rm SVar}$ has its arity ${\rm arity}(G) \ge 1$. 
\end{enumerate}
(2) The set ${\rm SLT}$ of the terms of the language is defined inductively:
\begin{enumerate}
\item [(a)] If $x \in {\rm Var}$ then $x \in {\rm SLT}$.
\item [(b)] If $\{ t_1, \ldots, t_n \} \subseteq {\rm SLT}$, $G \in {\rm SVar}$ has arity $n$ 
and $t_i$ and $t_j$ have disjoint variables for each $i, j \, (i \neq j)$, then
$G(t_1, \ldots, t_n) \in {\rm SLT}$. 
\end{enumerate}
(3) Assignments: 
\begin{enumerate}
\item [(a)]
A variable assignment is a function $\rho_1 : {\rm Var} \to \{ 0, 1 \}$.
\item [(b)] A second-order variable assignment is a function $\rho_2$ from ${\rm SVar}$ to the set ${\rm CP}$, where 
${\rm CP}$ is the set of constant functions and (positive) projection functions on ${ \{ 0, 1 \} }^{n}$ into $\{ 0, 1 \}$ for each $n \ge 1$. 
\end{enumerate}
(4) Models:
A model for ${\rm SLT}$ ${ [| - |] }_{\langle \rho_1, \rho_2 \rangle} : {\rm SLT} \to \{ 0, 1 \}$ is determined uniquely for a given
$\langle \rho_1, \rho_2 \rangle$ as follows:
\begin{enumerate}
\item [(a)] ${ [| x |] }_{\langle \rho_1, \rho_2 \rangle} = \rho_1(x)$. 
\item [(b)] ${ [| G(t_1, \ldots, t_n) |] }_{\langle \rho_1, \rho_2 \rangle} = \rho_2(G)( {[| t_1|]}_{\langle \rho_1, \rho_2 \rangle}, \ldots, {[| t_n |]}_{\langle \rho_1, \rho_2 \rangle})$.
\end{enumerate}
\end{definition}
We note that in the definition above, to each second-order variable, a constant function or a (positive) projection is assigned. 
The following proposition is Proposition 25 in \cite{Mat07}.
\begin{proposition}
\label{propSLT-basic}
Let $s_1, s_2$ be in ${\rm SLT}$.
If $s_1 \neq s_2$ then there are a variable assignment $\rho_1$ and a second-order variable assignment $\rho_2$
such that ${[| s_1 |]}_{\langle \rho_1, \rho_2 \rangle} \neq {[| s_2 |]}_{\langle \rho_1, \rho_2 \rangle}$.
\end{proposition}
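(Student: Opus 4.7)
The plan is to proceed by strong induction on $|s_1|+|s_2|$, dispatching cases on the outermost constructors of $s_1$ and $s_2$ and exploiting that $\rho_2$ takes values only in constants and positive projections. The two ``gadgets'' provided by ${\rm CP}$ play complementary roles: constant second-order assignments collapse the semantics of an application to a fixed value, which is handy whenever a structural mismatch is already visible at the top; positive projections turn the semantics of $G(r_1,\ldots,r_k)$ into that of $r_j$, which is handy for lifting a difference from a subterm up through a common head.

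Concretely, the cases divide as follows. If both $s_1$ and $s_2$ are variables, they are distinct and $\rho_1$ alone does the job. If exactly one is a variable, say $s_1 = x$ and $s_2 = H(u_1, \ldots, u_m)$, take $\rho_2(H)$ to be the constant function $1$ and $\rho_1(x) = 0$, giving ${[|s_1|]}_{\langle \rho_1,\rho_2 \rangle} = 0 \neq 1 = {[|s_2|]}_{\langle \rho_1, \rho_2 \rangle}$ regardless of the $u_j$'s. If $s_1 = G(t_1,\ldots,t_n)$ and $s_2 = H(u_1,\ldots,u_m)$ are both applications with $G \neq H$, send $\rho_2(G)$ and $\rho_2(H)$ to the two distinct constants. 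In the remaining case $G = H$, and hence $n = m$, the assumption $s_1 \neq s_2$ forces some index $i$ with $t_i \neq u_i$; apply the inductive hypothesis to this strictly smaller pair to obtain $\rho_1',\rho_2'$ with ${[|t_i|]}_{\langle \rho_1',\rho_2' \rangle} \neq {[|u_i|]}_{\langle \rho_1',\rho_2' \rangle}$, and extend by setting $\rho_2(G) := \pi_i$, so that ${[|s_1|]} = {[|t_i|]}$ and ${[|s_2|]} = {[|u_i|]}$ under the combined assignment.

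The delicate point, and what I expect to be the main obstacle, is exactly this last sub-case: if the outermost $G$ also occurs inside $t_i$ or $u_i$, then overwriting $\rho_2(G) := \pi_i$ at the root changes the values of $t_i$ and $u_i$ from those chosen by the inductive hypothesis and can thereby spoil the conclusion. The resolution is to lean on the disjoint-variables clause in the SLT formation rule, read so that ``variables'' covers both first-order and second-order variables: this ensures that each $G \in {\rm SVar}$ occurs at most once in an SLT term, making the reassignment harmless. This is precisely where the linearity baked into the SLT formation rule mirrors the linearity of the typing contexts on the $\lambda$-calculus side; without it, one would need to further strengthen the inductive hypothesis so that the returned $\rho_2'$ already satisfies $\rho_2'(G) = \pi_i$, or first rename interior occurrences of $G$ to fresh second-order variables before invoking the hypothesis.
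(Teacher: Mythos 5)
The paper does not actually prove Proposition~\ref{propSLT-basic}: it is imported as Proposition~25 of \cite{Mat07}, so there is no in-paper proof to compare against. Your induction on $|s_1|+|s_2|$ with the four-way case split (two variables; a variable against an application, killed by a constant at the head; distinct heads, killed by two distinct constants; a common head, reduced to a differing argument via a projection) is the natural argument, and under the intended reading of ${\rm SLT}$ it is correct. You have also isolated exactly the right delicate point: reassigning $\rho_2(G):=\pi_i$ at the root must not disturb the values of $t_i$ and $u_i$ delivered by the inductive hypothesis.

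One caveat about where that harmlessness really comes from. The property you need --- that the head variable $G$ does not reoccur inside its own arguments --- does \emph{not} follow from the sibling-disjointness clause of the ${\rm SLT}$ formation rule, even if ``variables'' is read to include second-order variables: that clause is vacuous when $G$ is unary and never constrains the head against its arguments, so terms such as $G(G(x))$ or $f(f(f(x)))$ are not excluded by it. The paper itself observes that $f(x)$ and $f(f(f(x)))$ cannot be separated over $\{0,1\}$, so the proposition is genuinely false without full linearity of second-order variables. What saves the statement is that the ${\rm SLT}$ terms to which it is applied are images of implicational closed linear $\lambda$-terms of order less than four, in which every first- and second-order variable occurs exactly once; you should invoke that global linearity assumption explicitly (or build it into the definition) rather than attribute it to the disjointness clause. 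Your two parenthetical fallbacks (strengthening the inductive hypothesis to force $\rho_2'(G)=\pi_i$, or renaming interior occurrences of $G$) would not rescue the non-linear case --- the $f(x)$ versus $f(f(f(x)))$ example defeats both --- but they are offered only as asides, and the main line of your argument stands.
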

This proposition essentially uses linearity: 
for example we can not separate $f(x)$ and $f(f(f(x)))$ over $\{ 0, 1 \}$.
Then as observed in \cite{Mat07}, we note that 
an implicational closed term ${\tt s}$ of a type ${\tt A}$ whose order is less than 4 is identified with an element $s$ of ${\rm SLT}$.
So, without loss of generality, 
we can write ${\tt s}$ as a closed linear term 
\[
\mbox{\tt fn x1=>} \cdots \mbox{\tt =>} \mbox{\tt fn xn=>} \mbox{\tt fn G1=>} \cdots \mbox{\tt =>} \mbox{\tt fn Gm =>} \, \, \mbox{\tt s0} 
\]
where 
the principal type of ${\tt s}$ has the following form: 
\[
\begin{array}{l}
\overbrace{\mbox{\tt 'a}_{01} \mbox{\tt ->} \, \, \cdots \, \, \mbox{\tt ->} \mbox{\tt 'a}_{0n}}^{n}
\mbox{\tt ->} \\
\quad \quad 
\mbox{\tt (} \overbrace{\mbox{\tt 'a}_{11} \mbox{\tt ->} \cdots \mbox{\tt ->'a}_{1k_1}}^{k_1} \mbox{\tt ->} \mbox{\tt 'a}_{10} \mbox{\tt )}
\mbox{\tt ->}
\cdots 
\mbox{\tt ->}
\mbox{\tt (} \overbrace{\mbox{\tt 'a}_{m1} \mbox{\tt ->} \cdots \mbox{\tt ->} \mbox{'a}_{mk_m}}^{k_m} \mbox{\tt ->} \mbox{\tt 'a}_{m0} \mbox{\tt )} \\
\quad \quad \quad \quad
\mbox{\tt ->} \mbox{\tt 'a}_{00} \, \, . 
\end{array}
\]
and 
each positive (resp. negative) occurrence of $\mbox{\tt 'a}_{ij}$ in the type has
the corresponding exactly one negative (resp. positive) occurrence of $\mbox{\tt 'a}_{ij}$.
Unlike the weak typed B\"{o}hm theorem in \cite{Mat07},
each $\mbox{\tt 'a}_{ij}$ will not be instantiated with the same type 
in main theorems in this paper:
it may be instantiated with an implicational type with higher order.
For this reason we need the notion of {\it poly-types}, which will be introduced in the next section. 

\section{Poly-Types}
In this section we introduce the notion of poly-types, which is the key concept in this paper.
For that purpose we need to introduce some notions. 
\paragraph{Principal Type Theorem}
A {\it type substitution} is a function from type variables to types.
It is well-known that any type substitution is uniquely extended to a function from types to types. 
A type {\tt A} is an instance of a type {\tt B} if 
there is a type substitution $\theta$ such that $\mbox{\tt A} = \mbox{\tt B} \theta$. 
A type {\tt A} is a principal type of a linear term $t$ if 
(i) for some typing context $\Gamma$, 
$\Gamma \vdash \mbox{\tt t} : \mbox{\tt A}$ is derivable and
(ii) when $\Gamma' \vdash {\tt t} : {\tt A'}$ is derivable,
${\tt A'}$ and $\Gamma'$ are an instance of {\tt A} and $\Gamma$ respectively.
By the definition, if both {\tt A} and {\tt A'} are principal types of {\tt t}, then
{\tt A} is an instance of {\tt A'} and vice versa.
So we can call {\tt A} {\it the principal type} of ${\tt t}$ without ambiguity and write it as ${\rm PT}(\mbox{\tt t})$.
An untyped $\lambda$-term {\tt t} is defined by the following syntax:
\[
\mbox{\tt t} ::= \mbox{\tt x} \, \, | \, \, \mbox{\tt t} \, \mbox{\tt s} \, \, | \, \, \mbox{\tt fn x=>t} \, \, | \, \, \mbox{\tt (t} \, {\tt ,} \, \mbox{\tt s)} \, \, | \, \, \mbox{\tt let val (x,y)=s in t}
\]
An untyped linear $\lambda$-term {\tt t} is an untyped $\lambda$-term such that
each free or bound variable in {\tt t} occurs exactly once in {\tt t}. 
\begin{proposition}
If an untyped linear $\lambda$-term $\mbox{\tt t}$ is typable by the type assignment system in the previous section, then 
it has the principal type ${\rm PT}(\mbox{\tt t})$
\end{proposition}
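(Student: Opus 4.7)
The plan is to prove this classical principal type theorem by structural induction on the untyped linear $\lambda$-term $\mathtt{t}$, following the Hindley style adapted to the linear setting. At each step one simultaneously builds a typing context $\Gamma$, a candidate type $\mathtt{A}$, and a substitution produced by Robinson unification; the induction hypothesis is that this procedure returns a derivation $\Gamma \vdash \mathtt{t}:\mathtt{A}$ through which every other such derivation factors via a type substitution.

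The base case $\mathtt{t} = \mathtt{x}$ is immediate: pick a fresh type variable $\mbox{\tt 'a}$ and take $\Gamma = \mbox{\tt x:'a}$, $\mathtt{A} = \mbox{\tt 'a}$; any derivation of $\mathtt{x}$ has the form $\mathtt{x}:\mathtt{B} \vdash \mathtt{x}:\mathtt{B}$, which is the instance under $\mbox{\tt 'a} \mapsto \mathtt{B}$. For an abstraction $\mathtt{fn \, x{=}{>}t'}$, apply the IH to $\mathtt{t'}$; by the linearity restriction $\mathtt{x}$ occurs exactly once in $\mathtt{t'}$, so the returned context has the form $\mathtt{x}:\mathtt{B}, \Gamma'$, giving principal type $\mathtt{B \to A'}$. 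For a pair $\mathtt{(t_1,t_2)}$, run the IH on the two subterms after first relabelling type variables to be disjoint; the principal type is $\mathtt{A_1 * A_2}$ over the concatenated context, which is well defined because in a linear term the free variables of $\mathtt{t_1}$ and $\mathtt{t_2}$ are disjoint.

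The two nontrivial cases use unification. For $\mathtt{t = t_1 \, t_2}$, assume by IH principal typings $\Gamma_1 \vdash \mathtt{t_1}:\mathtt{A_1}$ and $\Gamma_2 \vdash \mathtt{t_2}:\mathtt{A_2}$ with disjoint type variables; introduce a fresh $\mbox{\tt 'a}$ and compute a most general unifier $\theta$ of $\mathtt{A_1}$ with $\mathtt{A_2 \,{\to}\, \mbox{\tt 'a}}$. If $\theta$ exists, then $(\Gamma_1,\Gamma_2)\theta \vdash \mathtt{t_1 \, t_2}:\mbox{\tt 'a}\theta$ is the required principal typing; if no such unifier exists, then no typing of $\mathtt{t_1 \, t_2}$ exists either, contradicting the assumption of typability. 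The $\mathtt{let}$-case is analogous: introduce two fresh variables $\mbox{\tt 'a},\mbox{\tt 'b}$, unify the principal type of $\mathtt{s}$ with $\mathtt{\mbox{\tt 'a}*\mbox{\tt 'b}}$, and simultaneously unify $\mbox{\tt 'a}$ and $\mbox{\tt 'b}$ with the types assigned to $\mathtt{x}$ and $\mathtt{y}$ in the context returned by the IH on $\mathtt{t'}$. In both cases the most general unifier property of Robinson's algorithm is what guarantees principality: any other typing arises by further instantiation of $\theta$.

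The most delicate point is the bookkeeping that keeps the procedure compatible with linearity: one must check that the contexts produced in the application and pair cases really are disjoint on term variables (guaranteed by the linearity of $\mathtt{t}$), that fresh type variables are genuinely fresh across both recursive calls, and that in the abstraction and $\mathtt{let}$ cases the bound variable actually appears in the context returned by the IH (again forced by linearity, since every bound variable occurs exactly once). With those invariants in place the factorization claim for an arbitrary derivation $\Gamma' \vdash \mathtt{t}:\mathtt{A'}$ follows by an inner induction mirroring the outer one, each step using the universal property of the most general unifier to lift the factoring substitution through the unification step. I do not expect a serious obstacle beyond the standard care required for unification-based principal type algorithms; the essentially novel content of the proof is merely the observation that the linearity side condition on contexts is preserved by each construction.
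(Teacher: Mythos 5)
Your proof is correct, but it takes a different route from the paper: the paper disposes of this proposition in one line by citing the principal type result of Damas and Milner \cite{DM82} (suitably modified to accommodate the {\tt *} connective), starting from the assumed derivation of {\tt t} and extracting a principal one, whereas you reconstruct the entire unification-based algorithm---Hindley/Robinson style structural induction with most general unifiers at the application and {\tt let} cases---and verify its correctness from scratch. What the paper's approach buys is brevity, at the cost of leaving implicit exactly the points you make explicit: that the linearity side condition (each variable occurring exactly twice in a sequent) guarantees disjointness of the contexts in the application and pair cases and guarantees that bound variables genuinely appear in the contexts returned for the body in the abstraction and {\tt let} cases. Your version is the more self-contained and arguably the more honest one for the linear setting, since the ``easily modified version'' the paper invokes is precisely the bookkeeping you spell out; the only thing you might add is a remark that typability is assumed as a hypothesis, so the failure branches of unification never arise and the algorithm is only needed for its principality guarantee, which is how the paper uses the citation as well.
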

\begin{proof}
  By assumption, we have a derivation for the term ${\tt t}$ with a type. 
  Then by applying an easily modified version of the main result of \cite{DM82} (see Section 7 of \cite{DM82}) augmented with the ${\tt \ast}$ connective to $\mbox{\tt t}$, we have a derivation for the term $\mbox{\tt t}$ with the principal type.
$\Box$
\end{proof}
Since our linear $\lambda$-calculus has the $\mbox{\tt let}$-constructor and the $\mbox{\tt (} -\mbox{\tt ,} -\mbox{\tt )}$ constructor, 
any untyped $\lambda$-term is not necessarily typable.
A counterexample is {\tt let val (x,y)=fn z=>z in (x, y)}.
If the system has neither the {\tt let}-constructor nor the $\mbox{\tt (} -\mbox{\tt ,} -\mbox{\tt )}$ constructor,
then any untyped $\lambda$-term is typable (see Theorem 4.1 of \cite{Hin89}). 
\paragraph{Poly-types}
\begin{example}
\label{exAp}
{\rm
The following two terms are the basic constructs in \cite{Mai04}:\\
{\tt - fun True x y z = z x y;} \\
{\tt - fun False x y z = z y x;} \\
The terms {\tt True} and {\tt False} can be considered as the two normal terms of
\[
\MBB{B}_{\rm HM} = \mbox{\tt 'a} \mbox{\tt ->} \mbox{\tt 'a} \mbox{\tt ->} \mbox{\tt (}\mbox{\tt 'a} \mbox{\tt ->} \mbox{\tt 'a} \mbox{\tt ->} \mbox{\tt 'a} \mbox{\tt )} \mbox{\tt ->} \mbox{\tt 'a} . \]
The following term can be considered as a {\it not} gate for $\MBB{B}_{\rm HM}$:\\
{\tt - fun Not\_POLY p = p False True (fn f=>fn g=>(erase\_3 g) f);}\\
where \\
{\tt - fun I x = x;} \\
{\tt - fun erase\_3 p = p I I I;}\\
We explain the reason in the following. 
The term $\mbox{\tt Not\_POLY}$ has 
types ${\tt A0}  \mbox{\tt ->} \MBB{B}_{\rm HM}$
and ${\tt A1}  \mbox{\tt ->} \MBB{B}_{\rm HM}$, where
\[
\begin{array}{ll}
{\tt A0}  =  {\tt X0} \mbox{\tt ->} {\tt Y0}  \mbox{\tt ->} ({\tt X0} \mbox{\tt ->} {\tt Y0} \mbox{\tt ->} {\tt Z0}) \mbox{\tt ->} {\tt Z0} & {\tt X0}  =  {\tt A} = {\tt Z0} \\
{\tt Y0}  =  {\tt P} \mbox{\tt ->} ({\tt A} \mbox{\tt ->} {\tt A}) \mbox{\tt ->} ({\tt P} \mbox{\tt ->} {\tt P}) \mbox{\tt ->} ({\tt A} \mbox{\tt ->} {\tt A}) & \\
{\tt P}  =  ({\tt A} \mbox{\tt ->} {\tt A}) \mbox{\tt ->} ({\tt A} \mbox{\tt ->} {\tt A})  &
{\tt A}  = \MBB{B}_{\rm HM}
\\
{\tt A1}  =  {\tt X1} \mbox{\tt ->} {\tt Y1} \mbox{\tt ->} ({\tt Y1} \mbox{\tt ->} {\tt X1} \mbox{\tt ->} {\tt Z1}) \mbox{\tt ->} {\tt Z1} & 
{\tt Y1}  =  {\tt A} = {\tt Z1} \\
{\tt X1}  =  ({\tt A} \mbox{\tt ->} {\tt A}) \mbox{\tt ->} {\tt P} \mbox{\tt ->} ({\tt P} \mbox{\tt ->} {\tt P}) \mbox{\tt ->} ({\tt A}  \mbox{\tt ->} {\tt A})
&
\end{array}
\]
Observe that ${\tt A0} \neq {\tt A1}$.
Moreover it is easy to see that there is no type substitution $\theta$ such that
$\theta( {\tt A0}) = \theta( {\tt A1})$.
On the other hand,
two terms $\mbox{\tt True}$ and $\mbox{\tt False}$ have 
the principal types
\[
\mbox{\tt 'a} \mbox{\tt ->} \mbox{\tt 'b}  \mbox{\tt ->} (\mbox{\tt 'a} \mbox{\tt ->} \mbox{\tt 'b} \mbox{\tt ->} \mbox{\tt 'c}) \mbox{\tt ->} \mbox{\tt 'c}
\quad 
\mbox{and}
\quad
\mbox{\tt 'a} \mbox{\tt ->} \mbox{\tt 'b}  \mbox{\tt ->} (\mbox{\tt 'b} \mbox{\tt ->} \mbox{\tt 'a} \mbox{\tt ->} \mbox{\tt 'c}) \mbox{\tt ->} \mbox{\tt 'c},
\]
respectively.
Moreover, these types have instances ${\tt A0}$ and ${\tt A1}$ respectively. 
As a result,
two application terms $\mbox{\tt Not\_POLY} \, \, \, \mbox{\tt True}$
and 
$\mbox{\tt Not\_POLY} \, \, \, \mbox{\tt False}$
have a type $\MBB{B}_{\rm HM}$. 
}
\end{example}
Example~\ref{exAp} motivates the following definition.
\begin{definition}
\label{def-PolyTypability}
Let $t$ and $s$ be two closed linear $\lambda$-terms 
such that 
$\vdash \mbox{\tt t:A'->B'}$ and 
$\vdash \mbox{\tt s:A}$ are derivable
and for some type substitution $\theta_0$, 
$\theta_0({\rm PT}({\tt t})) = \mbox{\tt A0->B}$
and 
$\theta_0({\rm PT}({\tt s})) = \mbox{\tt A0}$.
Then we say that 
the term $\mbox{\tt t}$ is  poly-typable by $\mbox{\tt A->B}$ w.r.t. $\mbox{\tt s}$.
\end{definition}
When $\mbox{\tt t}$ is  poly-typable by $\mbox{\tt A->B}$ w.r.t. $\mbox{\tt s}$, 
observe that $\vdash \mbox{\tt t:A->B}$ is not necessarily derivable. 
For example, the term $\mbox{\tt Not\_POLY}$ is not typable by $\MBB{B}_{\rm HM} \mbox{\tt ->} \MBB{B}_{\rm HM}$,
but is  poly-typable by $\MBB{B}_{\rm HM} \mbox{\tt ->} \MBB{B}_{\rm HM}$ 
w.r.t. $\mbox{\tt True}$ and $\mbox{\tt False}$ respectively.
But then note that $\mbox{\tt t} \, \mbox{\tt s}$ has type $\mbox{\tt B}$ in the usual sense.
For example,  both $\mbox{\tt Not\_POLY} \, \, \, \mbox{\tt True}$ and $\mbox{\tt Not\_POLY} \, \, \, \mbox{\tt False}$
have type $\MBB{B}_{\rm HM}$. 

The importance of Definition~\ref{def-PolyTypability} is the composability of two poly-typable terms.
The proof of the following proposition is easy.
\begin{proposition}
\label{prop-PolyTypability}
Let $\mbox{\tt t}$ be poly-typable by $\mbox{\tt A->B}$ w.r.t. 
two terms $\mbox{\tt s}$ and $\mbox{\tt s'}$ with type ${\tt A}$.
Moreover let $\mbox{\tt t'}$ be poly-typable by $\mbox{\tt B->C}$ w.r.t.
the two terms $\mbox{\tt t s}$ and $\mbox{\tt t s'}$. 
Then the term $\mbox{\tt fn x=>(t'(t x))}$ are poly-typable by $\mbox{\tt A->C}$ w.r.t $\mbox{\tt s}$ and $\mbox{\tt s'}$.
\end{proposition}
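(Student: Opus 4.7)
The plan is to unwind Definition~\ref{def-PolyTypability} in terms of principal types and then show that the given witnesses of poly-typability for $\mbox{\tt t}$ and $\mbox{\tt t'}$ can be glued along the common type of $\mbox{\tt t \, s}$ (resp.\ $\mbox{\tt t \, s'}$) into a single witness for $\mbox{\tt fn x=>t'(t x)}$. Concretely, I would fix principal types ${\rm PT}(\mbox{\tt t}) = \tau_1 \LIMP \tau_2$, ${\rm PT}(\mbox{\tt t'}) = \sigma_1 \LIMP \sigma_2$, and ${\rm PT}(\mbox{\tt s}) = \pi$ with pairwise disjoint type variables, and let $\theta_s$ and $\theta'_{ts}$ be the witnesses of poly-typability for $\mbox{\tt t}$ w.r.t.\ $\mbox{\tt s}$ and for $\mbox{\tt t'}$ w.r.t.\ $\mbox{\tt t \, s}$, respectively. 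Writing $\mu_s$ for the MGU of $\tau_1$ and $\pi$, standard principal-type inference gives ${\rm PT}(\mbox{\tt t \, s}) = \mu_s(\tau_2)$.

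The first real step is to verify that $\mbox{\tt fn x=>t'(t x)}$ is typable. The equality $\theta'_{ts}(\sigma_1) = B_{0,ts} = \theta'_{ts}(\mu_s(\tau_2))$, which holds because $\theta'_{ts}$ witnesses poly-typability of $\mbox{\tt t'}$ w.r.t.\ $\mbox{\tt t \, s}$, shows that $\sigma_1$ and $\mu_s(\tau_2)$ are unifiable; letting $\nu$ be their MGU, $\mbox{\tt fn x=>t'(t x)}$ is typable with principal type $\nu(\mu_s(\tau_1)) \LIMP \nu(\sigma_2)$. Setting $A_0^* := \theta'_{ts}(\mu_s(\pi)) = \theta'_{ts}(\mu_s(\tau_1))$ (the equality holds because $\mu_s$ unifies $\tau_1$ and $\pi$), the composite substitution $\theta'_{ts} \circ \mu_s$ simultaneously sends $\tau_1 \LIMP \tau_2$ to $A_0^* \LIMP B_{0,ts}$, $\sigma_1 \LIMP \sigma_2$ to $B_{0,ts} \LIMP C$, and $\pi$ to $A_0^*$. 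This exhibits a concrete typing of $\mbox{\tt fn x=>t'(t x)}$ as $A_0^* \LIMP C$ together with $\mbox{\tt s}$ as $A_0^*$ in a single coherent derivation.

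The last step is a mechanical application of the principal type theorem: there exist substitutions $\psi_1, \psi_2$ acting on the (pairwise disjoint) variables of ${\rm PT}(\mbox{\tt fn x=>t'(t x)})$ and ${\rm PT}(\mbox{\tt s})$ with $\psi_1({\rm PT}(\mbox{\tt fn x=>t'(t x)})) = A_0^* \LIMP C$ and $\psi_2({\rm PT}(\mbox{\tt s})) = A_0^*$; by disjointness, their union is a single well-defined substitution $\psi_s$, which is the required witness. The argument for $\mbox{\tt s'}$ is identical with $\theta_{s'}$ and $\theta'_{ts'}$ in place of $\theta_s$ and $\theta'_{ts}$. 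The only real subtlety, and the main source of tediousness, is the careful bookkeeping needed to arrange the type variables of $\mbox{\tt t}$, $\mbox{\tt t'}$, $\mbox{\tt s}$, and ${\rm PT}(\mbox{\tt t \, s})$ pairwise disjointly and to confirm that $\theta'_{ts}$ does not impose assignments on variables of $\tau_1$ or $\pi$ that conflict with $\theta_s$. Linearity of the calculus, together with the disjoint-variable convention on principal types, makes this go through cleanly, which is why the proposition is claimed to admit an easy proof.
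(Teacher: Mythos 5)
The paper gives no argument for this proposition (it is simply declared easy), and your proposal is a correct, complete unwinding of Definition~\ref{def-PolyTypability} along exactly the lines the author evidently intends: route everything through the single substitution $\theta'_{ts} \circ \mu_s$, read off a concrete derivation typing $\mbox{\tt fn x=>t'(t x)}$ by $A_0^* \LIMP C$ and $\mbox{\tt s}$ by $A_0^*$, and then use the principal type theorem together with disjointness of type variables to merge the two resulting instantiations into one witnessing substitution. The only blemish is your parenthetical claim that $\nu(\mu_s(\tau_1)) \LIMP \nu(\sigma_2)$ is \emph{the} principal type of $\mbox{\tt fn x=>t'(t x)}$: since that term does not mention $\mbox{\tt s}$, $\mu_s$ cannot occur in its principal type, which is rather ${\rm mgu}(\sigma_1,\tau_2)(\tau_1) \LIMP {\rm mgu}(\sigma_1,\tau_2)(\sigma_2)$, of which your type is only an instance; this is harmless, however, because your final step uses nothing beyond typability of the term and the fact that the exhibited derivation is an instance of whatever the principal typing is.
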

We need a generalization of the definition above. 
Let $\mbox{\tt t}$ and $\mbox{\tt s}_i \, (1 \le i \le n)$ be closed linear $\lambda$-terms 
such that 
$\vdash \mbox{\tt t:A'}_{1} \mbox{\tt ->} \cdots \mbox{\tt ->} \mbox{\tt A'}_{n} \mbox{\tt ->} \mbox{\tt B'}$ and 
$\vdash \mbox{\tt s}_{i} \mbox{\tt :} \mbox{\tt A}_{i}$ are derivable.
If for some type substitution $\theta$, 
we have $\theta({\rm PT}({\tt t})) = \mbox{\tt A''}_{1} \mbox{\tt ->} \cdots \mbox{\tt ->} \mbox{\tt A''}_{n} \mbox{\tt ->} \mbox{\tt B}$
and $\theta({\rm PT}({\tt s}_{i})) = \mbox{\tt A''}_{i}$, then 
we say that 
the term $\mbox{\tt t}$ is poly-typable by $\mbox{\tt t:A}_{1} \mbox{\tt ->} \cdots \mbox{\tt ->} \mbox{\tt A}_{n} \mbox{\tt ->} \mbox{\tt B}$ w.r.t. $\mbox{\tt s}_{i}$.
\begin{remark}
Poly-types are used in \cite{Mai04} without referring
to it explicitly.
Let {\tt A} be
a uniform data type consisting of exactly one type variable {\tt 'a}
(for example, 
$\MBB{B}_{\rm HM} = \mbox{\tt 'a} \mbox{\tt ->} \mbox{\tt 'a} \mbox{\tt ->} \mbox{\tt (}\mbox{\tt 'a} \mbox{\tt ->} \mbox{\tt 'a} \mbox{\tt ->} \mbox{\tt 'a} \mbox{\tt )} \mbox{\tt ->} \mbox{\tt 'a}$).
In general,
the principal type of a closed term of {\tt A}
is more general than {\tt A}.
The basic idea is to utilize the difference ingeniously.
By using more general types, we can acquire more expressive power. 
\end{remark}

\section{Strong Typed B\"{o}hm Theorem}
In this section we prove the first main theorem of this paper: a version of {\it the typed B\"{o}hm theorem} with regard to $=_{\beta \eta {\rm c}}$.
First we give some preliminary results, which state
that
for any types $\mbox{\tt A}$ and $\mbox{\tt B}$ having at least one closed term,
we can always represent any projection from $\mbox{\tt A} \times \cdots \times \mbox{\tt A}$ to $\mbox{\tt A}$
and any constant function from $\mbox{\tt A}$ to $\mbox{\tt B}$
using the notion of poly-types. 
\begin{lemma}[Projection Lemma]
\label{lemmaProjectionLemma}
Let ${\tt A}$ be a type having at least one closed term.
For any type ${\tt B}$, 
there is a closed term ${\tt t}$ that is poly-typable by $\mbox{\tt A} \mbox{\tt ->} (\mbox{\tt B} \mbox{\tt ->} \mbox{\tt B})$ 
w.r.t. any closed term $\mbox{\tt s}$ of $\mbox{\tt A}$
such that 
\[
{\tt t} \, \, {\tt s} =_{\beta \eta {\rm c}} \mbox{\tt I}
\]
\end{lemma}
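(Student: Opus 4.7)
The plan is to reduce the problem to the low-order implicational case via the Linear Distributive Transformation and then to exploit the second-order linear term (SLT) structure of $\mbox{\tt s}$ to uniformly produce $\mbox{\tt I}$ through suitably chosen composers.

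First, applying Proposition~\ref{propLDT-basic} gives a term $\mbox{\tt LDTr\_A}$ sending every closed $\mbox{\tt s:A}$ to a closed term of some fixed implicational type $\mbox{\tt A0}$ of order less than four. By composability of poly-typable terms (Proposition~\ref{prop-PolyTypability}), it is enough to construct a closed $\mbox{\tt t0}$ that is poly-typable by $\mbox{\tt A0->(B->B)}$ w.r.t.\ every closed $\mbox{\tt s0:A0}$ and satisfies $\mbox{\tt t0}\,\mbox{\tt s0} =_{\beta\eta{\rm c}} \mbox{\tt I}$; then $\mbox{\tt t} = \mbox{\tt fn x => t0 (LDTr\_A x)}$ meets the lemma.

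Second, by the discussion at the end of Section~3, every closed $\mbox{\tt s0:A0}$ is $=_{\beta\eta{\rm c}}$-equal to $\mbox{\tt fn x1 =>}\cdots\mbox{\tt => fn xn => fn G1 =>}\cdots\mbox{\tt => fn Gm =>}\,e$ where $e$ is a closed SLT term, and ${\rm PT}(\mbox{\tt s0})$ has the explicit shape displayed there. Crucially, $n$, $m$ and the arities $k_j$ are determined by $\mbox{\tt A0}$, not by the particular $\mbox{\tt s0}$.

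Third, I define
\[
\mbox{\tt t0} \;=\; \mbox{\tt fn p =>}\, p \, \mbox{\tt I} \cdots \mbox{\tt I} \, c_1 \cdots c_m ,
\]
with $n$ copies of $\mbox{\tt I} = \mbox{\tt fn x => x}$ and each composer $c_j = \lambda u_1 \cdots \lambda u_{k_j}.\lambda x.\,u_1(u_2(\cdots u_{k_j}(x) \cdots))$ (with $c_j = \mbox{\tt I}$ if $k_j = 0$). For each $\mbox{\tt s0}$, take the substitution $\theta_0$ that sends every ${\mbox{\tt 'a}}_{ij}$ in ${\rm PT}(\mbox{\tt s0})$ to $\mbox{\tt B->B}$ and extends appropriately on ${\rm PT}(\mbox{\tt t0})$'s fresh variables so that $\theta_0({\rm PT}(\mbox{\tt t0})) = \theta_0({\rm PT}(\mbox{\tt s0}))\mbox{\tt ->(B->B)}$; this witnesses poly-typability. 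At this instantiation, $\mbox{\tt t0}\,\mbox{\tt s0}$ $\beta$-reduces to $e[x_i := \mbox{\tt I}, G_j := c_j]$, which I show equal to $\mbox{\tt I}$ modulo $=_{\beta\eta{\rm c}}$ by induction on $e$: the variable case is immediate; for $e = G_j(v_1,\ldots,v_{k_j})$ the inductive hypothesis gives $v_\ell[\cdots] =_{\beta\eta{\rm c}} \mbox{\tt I}$, and then $c_j$ applied to $k_j$ copies of $\mbox{\tt I}$ $\beta$-reduces to $\mbox{\tt I}$.

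The main obstacle will be verifying the poly-typability in the third step. Different closed terms of $\mbox{\tt A0}$ can have non-unifiable principal types (for example, $\mbox{\tt True}$ and $\mbox{\tt False}$ in $\MBB{B}_{\rm HM}$), so no single fixed principal type $\mathrm{PT}(\mbox{\tt s0})$ works for every $\mbox{\tt s0}$. What saves us is precisely that poly-typability allows choosing a fresh $\theta_0$ per $\mbox{\tt s0}$; the rigid skeleton of ${\rm PT}(\mbox{\tt s0})$ dictated by $\mbox{\tt A0}$ (specifically, the fixed $n$, $m$, $k_j$) guarantees a matching $\theta_0$ always exists.
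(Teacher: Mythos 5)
Your construction is essentially the paper's own proof: apply $\mbox{\tt LDTr\_A}$, feed $\mbox{\tt I}$ at the first-order argument positions and composer gadgets at the second-order positions, and conclude by induction on the SLT structure of the body that everything collapses to $\mbox{\tt I}$, with poly-typability witnessed by a per-term substitution into types built from $\mbox{\tt B}$. The only cosmetic differences are that your composers $\lambda u_1\cdots\lambda u_{k_j}\lambda x.\,u_1(\cdots(u_{k_j}\,x)\cdots)$ are $\eta$-expanded variants of the paper's $\mbox{\tt uj} = \lambda x_1\cdots\lambda x_{k_j}.\,x_1(\cdots(x_{k_j}\,\mbox{\tt I})\cdots)$, and that you spell out the induction and the typing verification somewhat more explicitly than the paper does.
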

\begin{proof}
The term ${\tt t}$ that we are looking for has the following form:
\[
\mbox{\tt fun t x0 = LDTr\_A} \, \, \mbox{\tt x0} \, \overbrace{\mbox{\tt I} \, \, \cdots \, \, \mbox{\tt I}}^{n} \, \, \overbrace{\mbox{\tt u1} \, \, \cdots \, \, \mbox{\tt um}}^{m} \mbox{\tt ;}
\]
where $\mbox{\tt LDTr\_A}$ is the closed term obtained using Proposition~\ref{propLDT-basic}
and the closed term ${\tt uj}$ is defined by 
\[
\mbox{\tt fun uj x1} \cdots {\mbox{\tt xkj-1 xkj}} \, \, \mbox{\tt = x1 (} \, \, \cdots \, \, \mbox{\tt (xkj-1 (xkj I))} \cdots \mbox{\tt )} \mbox{\tt ;}
\]
for each $j \, (1 \le j \le m)$.
We note that the only occurrence of ${\tt I}$ in ${\tt uj}$ is typed by $\mbox{\tt 'a->'a}$ in the principal typing,
which implies that it can be typed by $\mbox{\tt B->B}$.
We also observe that the principal type of $\mbox{\tt LDTr\_A} \, \, \mbox{\tt s0}$ has the following form: 
\[
\begin{array}{l}
\overbrace{\mbox{\tt 'a}_{01} \mbox{\tt ->} \, \, \cdots \, \, \mbox{\tt ->} \mbox{\tt 'a}_{0n}}^{n}
\mbox{\tt ->} \\
\quad \quad 
\mbox{\tt (} \overbrace{\mbox{\tt 'a}_{11} \mbox{\tt ->} \cdots \mbox{\tt ->'a}_{1k_1}}^{k_1} \mbox{\tt ->} \mbox{\tt 'a}_{10} \mbox{\tt )}
\mbox{\tt ->}
\cdots 
\mbox{\tt ->}
\mbox{\tt (} \overbrace{\mbox{\tt 'a}_{m1} \mbox{\tt ->} \cdots \mbox{\tt ->} \mbox{'a}_{mk_m}}^{k_m} \mbox{\tt ->} \mbox{\tt 'a}_{m0} \mbox{\tt )} \\
\quad \quad \quad \quad
\mbox{\tt ->} \mbox{\tt 'a}_{00} \, \, . 
\end{array}
\]
where
each positive (resp. negative) occurrence of $\mbox{\tt 'a}_{ij}$ in the type has
the corresponding exactly one negative (resp. positive) occurrence of $\mbox{\tt 'a}_{ij}$.
Since the combinator ${\tt I}$ is substituted for each bounded variables ${\tt xi} \, (1 \le i \le k_j)$ in ${\tt uj}$, 
the application term  ${\tt (} {\tt t} \, {\tt  s} {\tt )}$ is reduced to ${\tt I}$.
Since the only occurrence of ${\tt I}$ in ${\tt uj}$ can be typed by $\mbox{\tt B->B}$,
the term  ${\tt (} {\tt t} \, {\tt  s} {\tt )}$ can be typed by $\mbox{\tt B->B}$.
This means that
${\tt t}$ can be poly-typed by $\mbox{\tt A->(B->B)}$ w.r.t. any closed term of type $\mbox{\tt A}$.
$\Box$
\end{proof}
Note that a type variable $\mbox{\tt 'a}_{ij}$ may be instantiated with an implicational type of very higher order
in the term {\tt t}.
For this reason we need the notion of {\it poly-types}.

The following corollary, which is a generalization of the proposition above to $n$-ary case, is obtained as a direct consequence of it. 
\begin{corollary}
\label{corProjectionLemma-1}
Let ${\tt A}$ be a type having at least one closed term.
There is an $i$-th projection that is poly-typable by
$\overbrace{\mbox{\tt A->} \, \, \cdots \, \, \mbox{\tt ->A}}^{n} \mbox{\tt ->A}$
for each $i \, (1 \le i \le n)$ and 
for any $n$.
\end{corollary}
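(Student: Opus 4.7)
The plan is to combine the unary Projection Lemma with a chain of identity-type applications that peel away every argument except the $i$-th. For each $j \in \{1,\ldots,n\}$ with $j \ne i$, I would apply Lemma~\ref{lemmaProjectionLemma} with $\mbox{\tt B} := \mbox{\tt A}$ to obtain a closed term $\mbox{\tt t}_j$ that is poly-typable by $\mbox{\tt A->(A->A)}$ w.r.t. any closed term of type $\mbox{\tt A}$ and that satisfies $\mbox{\tt t}_j \, \mbox{\tt s} =_{\beta \eta {\rm c}} \mbox{\tt I}$ for every closed term $\mbox{\tt s}$ of $\mbox{\tt A}$. Then I would define the desired projection as
\[
\mbox{\tt fn x1=>} \cdots \mbox{\tt =>fn xn=>}\, (\mbox{\tt t}_1\,\mbox{\tt x1})\bigl(\cdots (\mbox{\tt t}_{i-1}\,\mbox{\tt x(i-1)})\bigl((\mbox{\tt t}_{i+1}\,\mbox{\tt x(i+1)})\bigl(\cdots (\mbox{\tt t}_n\,\mbox{\tt xn})\,\mbox{\tt xi}\bigr)\bigr)\bigr),
\]
i.e., I wrap $\mbox{\tt xi}$ in $n-1$ identity factors, one for each $\mbox{\tt x}_j$ with $j \ne i$. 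The degenerate case $n = 1$ is just the identity combinator $\mbox{\tt I}$ itself.

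Reduction correctness is immediate: given closed terms $\mbox{\tt s}_1, \ldots, \mbox{\tt s}_n$ of $\mbox{\tt A}$, each factor $\mbox{\tt t}_j \, \mbox{\tt s}_j$ for $j \ne i$ $\beta \eta {\rm c}$-reduces to $\mbox{\tt I}$ by Lemma~\ref{lemmaProjectionLemma}, and a nested chain of identity applications to $\mbox{\tt s}_i$ leaves it unchanged, so the whole application is $=_{\beta \eta {\rm c}} \mbox{\tt s}_i$, as required for the $i$-th projection.

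The delicate point is the poly-typability claim. I would inspect the principal type of the whole expression: at each inner node the application $\mbox{\tt t}_j\,\mbox{\tt x}_j$ yields a subterm whose principal type, after unifying the first-argument slot of $\mbox{\tt t}_j$'s principal type with the type variable assigned to $\mbox{\tt x}_j$, has the form $\mbox{\tt Y}_j\mbox{\tt ->}\mbox{\tt Y}_j$; successively wrapping these around $\mbox{\tt xi}$ then forces every $\mbox{\tt Y}_j$ to be unified with the type of $\mbox{\tt xi}$. Since each $\mbox{\tt t}_j$ is poly-typable by $\mbox{\tt A->(A->A)}$, the resulting free type variables all admit $\mbox{\tt A}$ as an instance, and because the different $\mbox{\tt t}_j$'s can be assumed to use pairwise disjoint type variables in their principal types, a single type substitution collapses all of them to $\mbox{\tt A}$ simultaneously. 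This witnesses poly-typability by $\overbrace{\mbox{\tt A->}\cdots \mbox{\tt ->A}}^{n}\mbox{\tt ->A}$ w.r.t. any closed terms $\mbox{\tt s}_1,\ldots,\mbox{\tt s}_n$ of $\mbox{\tt A}$. The main obstacle is this consistency check — essentially the observation that the unifications introduced by the chained applications identify only variables that each individually admit $\mbox{\tt A}$ — and once it is in place the edge cases ($i = 1$, $i = n$, $n = 1$) are absorbed into the same pattern with empty prefixes or suffixes.
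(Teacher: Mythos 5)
Your proposal is correct and is essentially the paper's own proof: the paper likewise builds the term by wrapping the projected argument $\mbox{\tt xi}$ in a nested chain of applications $\mbox{\tt bxj} = \mbox{\tt LDTr\_A}\,\mbox{\tt xj}\,\mbox{\tt I}\cdots\mbox{\tt I}\,\mbox{\tt u1}\cdots\mbox{\tt um}$, each of which is exactly your $\mbox{\tt t}_j\,\mbox{\tt x}_j$ from Lemma~\ref{lemmaProjectionLemma} with $\mbox{\tt B}:=\mbox{\tt A}$ and reduces to $\mbox{\tt I}$ with poly-type $\mbox{\tt A->A}$. Your extra care about the consistency of the type substitution across the chained applications is a reasonable elaboration of what the paper leaves implicit (cf.\ Proposition~\ref{prop-PolyTypability}).
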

\begin{proof}
Think $\overbrace{\mbox{\tt A->} \, \, \cdots \, \, \mbox{\tt ->A}}^{n} \mbox{\tt ->A}$ 
as $\overbrace{\mbox{\tt A->} \, \, \cdots \, \, \mbox{\tt ->A}}^{n-1} \mbox{\tt ->} \mbox{\tt(} \mbox{\tt A->A} \mbox{\tt)}$.
Then let $\mbox{\tt bxi}$ be
\[
\mbox{\tt LDTr\_A} \, \mbox{\tt xi} \, \overbrace{\mbox{\tt I} \, \, \cdots \, \, \mbox{\tt I}}^{n_i} \, \, \overbrace{\mbox{\tt u1} \, \, \cdots \, \, \mbox{\tt umi}}^{m_i} \mbox{\tt ;}
\]
for $i \, (0 \le i \le n-1)$.
The term ${\tt t}$ that we are looking for has the following form:
\[
\mbox{\tt fun t x0} \cdots \mbox{\tt xn-1} \, \, \mbox{\tt xn}
\mbox{\tt = bx0 (} \, \, \cdots \, \, \mbox{\tt (bxn-2 (bxn-1 xn))} \cdots \mbox{\tt )} \mbox{\tt ;}
\]
$\Box$
\end{proof}
\begin{lemma}[Constant Function Lemma]
\label{lemmaConstantFunctionLemma}
Let ${\tt A}$ and ${\tt B}$ be types having at least one closed term.
Let ${\tt u}$ be a closed term of ${\tt B}$.
Then there is a closed term ${\tt t}$ that is poly-typable by $\mbox{\tt A} \mbox{\tt ->} \mbox{\tt B}$ 
w.r.t. any closed term $\mbox{\tt s}$ of $\mbox{\tt A}$
such that 
\[
{\tt t} \, \, {\tt s} =_{\beta \eta {\rm c}} \mbox{\tt u}
\]
\end{lemma}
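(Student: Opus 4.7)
The plan is to reduce the Constant Function Lemma to the Projection Lemma by composing the given constant $\mbox{\tt u}$ with a poly-typable ``eraser'' of $\mbox{\tt A}$. Concretely, first I would invoke Lemma~\ref{lemmaProjectionLemma} at types ${\tt A}$ and ${\tt B}$ to obtain a closed term ${\tt t0}$ that is poly-typable by $\mbox{\tt A->(B->B)}$ with respect to every closed term of $\mbox{\tt A}$ and satisfies $\mbox{\tt t0 s} =_{\beta \eta {\rm c}} \mbox{\tt I}$. Then I would set
\[
\mbox{\tt fun t x = (t0 x) u;}
\]
so that $\mbox{\tt t s} =_{\beta \eta {\rm c}} (\mbox{\tt t0 s}) \, \mbox{\tt u} =_{\beta \eta {\rm c}} \mbox{\tt I u} =_{\beta \eta {\rm c}} \mbox{\tt u}$, as required. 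Linearity is immediate: $\mbox{\tt x}$ occurs exactly once, and $\mbox{\tt t0}$ and $\mbox{\tt u}$ are closed.

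The remaining step is to verify that ${\tt t}$ is poly-typable by $\mbox{\tt A->B}$ with respect to any closed term $\mbox{\tt s}$ of $\mbox{\tt A}$. My plan is to apply the composability result, Proposition~\ref{prop-PolyTypability}, with the two factors ${\tt t0}$ and $\mbox{\tt t2} \equiv \mbox{\tt fn f=>f u}$. The term ${\tt t0}$ is poly-typable by $\mbox{\tt A->(B->B)}$ w.r.t.\ ${\tt s}$ by construction. For ${\tt t2}$, its principal type has the shape $(\mbox{\tt PT(u)->}\beta)\mbox{\tt ->}\beta$, while $\mbox{\tt t0 s}$ has the same principal type $\alpha\mbox{\tt ->}\alpha$ as ${\tt I}$ (principal types are preserved under $=_{\beta \eta {\rm c}}$). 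Hence there is a substitution $\theta$ sending $\alpha\mapsto \mbox{\tt B}$, $\beta\mapsto \mbox{\tt B}$, and instantiating $\mbox{\tt PT(u)}$ to $\mbox{\tt B}$ (which is possible since $\vdash \mbox{\tt u:B}$ is derivable). This witnesses that ${\tt t2}$ is poly-typable by $\mbox{\tt (B->B)->B}$ w.r.t.\ $\mbox{\tt t0 s}$. Proposition~\ref{prop-PolyTypability} then yields that $\mbox{\tt fn x=>(t2(t0 x))}$ is poly-typable by $\mbox{\tt A->B}$ w.r.t.\ ${\tt s}$, and since this term $\beta$-reduces to our ${\tt t}$, the principal type is preserved and ${\tt t}$ inherits the poly-typability.

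The main obstacle I anticipate is the bookkeeping at the very last step, making sure that the substitution producing the instance $\mbox{\tt A0 -> B}$ of $\mbox{\tt PT(t)}$ is simultaneously compatible with the substitution used inside the Projection Lemma, where the variables $\mbox{\tt 'a}_{ij}$ may get instantiated by implicational types of arbitrarily high order. Invoking Proposition~\ref{prop-PolyTypability} sidesteps most of this cleanly, since that proposition is precisely the statement that poly-typing substitutions compose; the only thing to verify by hand is the simple unification of $\mbox{\tt PT(u)}$ with $\mbox{\tt B}$ and of the ``$\mbox{\tt B->B}$''-side of ${\tt t2}$ with the principal type of $\mbox{\tt t0 s}$, which is straightforward.
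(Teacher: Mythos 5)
Your proposal is correct and follows essentially the same route as the paper: the paper's proof is exactly the one-liner $\mbox{\tt fun t x0 = proj x0 u}$ where $\mbox{\tt proj}$ is the term from the Projection Lemma, which is literally your $\mbox{\tt fun t x = (t0 x) u}$. Your additional verification of poly-typability via Proposition~\ref{prop-PolyTypability} is more detailed than the paper, which leaves that step implicit, but it is the intended justification.
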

\begin{proof}
  Let $\mbox{\tt proj}$ be the term which is 
  poly-typable by $\mbox{\tt A} \mbox{\tt ->} (\mbox{\tt B} \mbox{\tt ->} \mbox{\tt B})$ 
  w.r.t. any closed term $\mbox{\tt s}$ of $\mbox{\tt A}$
  obtained using Lemma~\ref{lemmaProjectionLemma}.
  The term ${\tt t}$ that we are looking for is the following term:
\[
\mbox{\tt fun t x0  = proj x0 u}
\]
$\Box$
\end{proof}
\begin{corollary}
\label{corConstantFunctionLemma}
Let ${\tt A}$ be a type having at least one closed term.
Let ${\tt s}$ be such a closed term. 
There is a constant function that always returns $\mbox{\tt s}$ and is poly-typable by
$\overbrace{\mbox{\tt A->} \, \, \cdots \, \, \mbox{\tt ->A}}^{n} \mbox{\tt ->A}$ 
for any $n$.
\end{corollary}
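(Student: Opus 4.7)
The plan is to establish the corollary by induction on $n$. For the base case $n = 1$, we apply Lemma~\ref{lemmaConstantFunctionLemma} directly with $\mbox{\tt B} := \mbox{\tt A}$ and $\mbox{\tt u} := \mbox{\tt s}$, obtaining a closed term poly-typable by $\mbox{\tt A->A}$ that always reduces to $\mbox{\tt s}$.

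For the inductive step from $n-1$ to $n$, set $\mbox{\tt B}_{n-1} := \overbrace{\mbox{\tt A->}\cdots\mbox{\tt ->A}}^{n-1}\mbox{\tt ->A}$ and let $\mbox{\tt f}_{n-1}$ be the closed term produced by the induction hypothesis; in particular $\mbox{\tt f}_{n-1}$ is a closed inhabitant of $\mbox{\tt B}_{n-1}$, so $\mbox{\tt B}_{n-1}$ has at least one closed term. Apply Lemma~\ref{lemmaConstantFunctionLemma} again, this time with $\mbox{\tt B} := \mbox{\tt B}_{n-1}$ and $\mbox{\tt u} := \mbox{\tt f}_{n-1}$, to produce a closed term $\mbox{\tt f}_n$ poly-typable by $\mbox{\tt A->B}_{n-1}$, which is literally $\overbrace{\mbox{\tt A->}\cdots\mbox{\tt ->A}}^{n}\mbox{\tt ->A}$, and satisfying $\mbox{\tt f}_n\, \mbox{\tt x} =_{\beta \eta {\rm c}} \mbox{\tt f}_{n-1}$ for every closed term $\mbox{\tt x}$ of $\mbox{\tt A}$. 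Given any closed terms $\mbox{\tt s}_1,\ldots,\mbox{\tt s}_n$ of $\mbox{\tt A}$, congruence of $=_{\beta \eta {\rm c}}$ under application then gives
\[
\mbox{\tt f}_n\, \mbox{\tt s}_1\, \mbox{\tt s}_2\, \cdots\, \mbox{\tt s}_n
\; =_{\beta \eta {\rm c}} \;
\mbox{\tt f}_{n-1}\, \mbox{\tt s}_2\, \cdots\, \mbox{\tt s}_n
\; =_{\beta \eta {\rm c}} \;
\mbox{\tt s}
\]
by the induction hypothesis, so $\mbox{\tt f}_n$ is the desired constant function.

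The main obstacle will be upgrading the unary poly-typability supplied by Lemma~\ref{lemmaConstantFunctionLemma} to the $n$-ary poly-typability that the corollary requires, since the latter asks for a \emph{single} type substitution $\theta$ simultaneously instantiating the principal type of $\mbox{\tt f}_n$ and the principal types of all the $\mbox{\tt s}_i$. The strategy is compositional: after renaming principal types so that they use pairwise disjoint type variables, the unary substitution supplied at the current step aligns $\mbox{\tt s}_1$ with the first argument slot of $\mbox{\tt f}_n$, while the induction hypothesis furnishes a substitution aligning $\mbox{\tt s}_2,\ldots,\mbox{\tt s}_n$ with the argument slots hidden inside the $\mbox{\tt B}_{n-1}$ part of $\mbox{\tt A->B}_{n-1}$. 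These two substitutions act on disjoint sets of type variables and therefore may be merged into one $\theta$ witnessing poly-typability of $\mbox{\tt f}_n$ by $\overbrace{\mbox{\tt A->}\cdots\mbox{\tt ->A}}^{n}\mbox{\tt ->A}$ with respect to $\mbox{\tt s}_1,\ldots,\mbox{\tt s}_n$. This is the same compositional principle underlying Proposition~\ref{prop-PolyTypability}, just extended from two arguments to $n$.
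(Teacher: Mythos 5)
The paper states this corollary without proof, leaving it as a direct analogue of Corollary~\ref{corProjectionLemma-1}: one writes the term directly as {\tt fun t x1 $\cdots$ xn = bx1($\cdots$(bxn s)$\cdots$)}, where each {\tt bxi} is the Projection Lemma term applied to {\tt xi} (reducing to {\tt I}, whose sole occurrence carries a fresh type variable instantiable at {\tt A}), so that the whole body reduces to {\tt s}. Your inductive strategy is a reasonable alternative in outline, but it contains a concrete false step.

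The problem is the sentence ``in particular $\mbox{\tt f}_{n-1}$ is a closed inhabitant of $\mbox{\tt B}_{n-1}$, so $\mbox{\tt B}_{n-1}$ has at least one closed term.'' Poly-typability by $\mbox{\tt B}_{n-1}$ does \emph{not} give a derivation of $\vdash \mbox{\tt f}_{n-1} : \mbox{\tt B}_{n-1}$; the paper warns about exactly this right after Definition~\ref{def-PolyTypability} ({\tt Not\_POLY} is poly-typable by $\MBB{B}_{\rm HM} \mbox{\tt ->} \MBB{B}_{\rm HM}$ but not typable by it). Worse, $\overbrace{\mbox{\tt A->}\cdots\mbox{\tt ->A}}^{n-1}\mbox{\tt ->A}$ may have \emph{no} closed linear inhabitant at all --- take $\mbox{\tt A} = \mbox{\tt 'a}$ and $n-1=2$: there is no closed linear term of type $\mbox{\tt 'a->'a->'a}$. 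This is precisely why poly-types are needed in the first place. Consequently your invocation of Lemma~\ref{lemmaConstantFunctionLemma} with $\mbox{\tt B} := \mbox{\tt B}_{n-1}$ and $\mbox{\tt u} := \mbox{\tt f}_{n-1}$ violates both hypotheses of that lemma (``$\mbox{\tt B}$ having at least one closed term'' and ``$\mbox{\tt u}$ a closed term of $\mbox{\tt B}$''). The induction can be repaired by instantiating the lemma at a type $\mbox{\tt f}_{n-1}$ actually inhabits, namely $\mbox{\tt A''}_2\mbox{\tt ->}\cdots\mbox{\tt ->}\mbox{\tt A''}_n\mbox{\tt ->A}$, the instance of ${\rm PT}(\mbox{\tt f}_{n-1})$ witnessing its poly-typability, and then carrying out the substitution-merging you sketch in your final paragraph; but as written, the step from $n-1$ to $n$ does not go through, and the direct non-inductive construction is both simpler and closer to what the paper intends.
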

\begin{theorem}[Strong Typed B\"{o}hm Theorem]
\label{thmStrongTypedBohmTheorem}
For any types ${\tt A}$ and ${\tt B}$,
when any two different closed terms ${\tt s1}$ and ${\tt s2}$ of type ${\tt A}$
and any closed terms ${\tt u1}$ and ${\tt u2}$ of type ${\tt B}$ are given, 
there is a closed term ${\tt t}$ that is poly-typable by $\mbox{\tt A->B}$ such that
\[
\mbox{\tt t} \, \, \mbox{\tt s1} =_{\beta \eta {\rm c}} \mbox{\tt u1}
\, \, \mbox{and} \, \, 
\mbox{\tt t} \, \, \mbox{\tt s2} =_{\beta \eta {\rm c}} \mbox{\tt u2}
\]
\end{theorem}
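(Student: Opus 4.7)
The plan is to combine the Linear Distributive Transformation with the second-order linear term system, and then feed in the projections and constant functions built in the previous section. First, applying Proposition~\ref{propLDT-basic} produces $\mbox{\tt s1'} \equiv \mbox{\tt LDTr\_A s1}$ and $\mbox{\tt s2'} \equiv \mbox{\tt LDTr\_A s2}$, two different closed terms of a common implicational type $\mbox{\tt A0}$ of order less than four. Both can therefore be identified with elements of ${\rm SLT}$: after full $\eta$-expansion they take the shape $\mbox{\tt fn x1=>} \cdots \mbox{\tt fn xn=>fn G1=>} \cdots \mbox{\tt fn Gm=>s0}$, with the $\mbox{\tt xi}$ playing the role of first-order variables and each $\mbox{\tt Gj}$ the role of a second-order variable of arity $k_j$. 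Since $\mbox{\tt s1'} \ne_{\beta\eta{\rm c}} \mbox{\tt s2'}$, Proposition~\ref{propSLT-basic} supplies assignments $\rho_1, \rho_2$ with ${[| \mbox{\tt s1'} |]}_{\langle \rho_1, \rho_2\rangle} \neq {[| \mbox{\tt s2'} |]}_{\langle \rho_1, \rho_2\rangle}$; without loss of generality these values are $0$ and $1$ respectively.

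Next I would construct the closed term
\[
\mbox{\tt s''} \, \equiv \, \mbox{\tt fn y=>y } \mbox{\tt a}_{1} \cdots \mbox{\tt a}_{n} \, \mbox{\tt H}_{1} \cdots \mbox{\tt H}_{m}
\]
where each $\mbox{\tt a}_{i}$ is the closed term $\mbox{\tt u}_{\rho_1(\mbox{\tt xi})+1}$ of type $\mbox{\tt B}$, and each $\mbox{\tt H}_{j}$ is a closed term poly-typable by $\mbox{\tt B} \mbox{\tt ->} \cdots \mbox{\tt ->} \mbox{\tt B} \mbox{\tt ->} \mbox{\tt B}$ ($k_j + 1$ copies of $\mbox{\tt B}$) that simulates $\rho_2(\mbox{\tt Gj})$. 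If $\rho_2(\mbox{\tt Gj})$ is the $i$-th $k_j$-ary projection, I take $\mbox{\tt H}_{j}$ from Corollary~\ref{corProjectionLemma-1} instantiated at $\mbox{\tt A}=\mbox{\tt B}$; if $\rho_2(\mbox{\tt Gj})$ is the constant $c \in \{0,1\}$, I take $\mbox{\tt H}_{j}$ to be a $k_j$-ary constant function returning $\mbox{\tt u}_{c+1}$, built from (a minor generalisation of) Corollary~\ref{corConstantFunctionLemma}. This choice uniformly instantiates every $\mbox{\tt 'a}_{ij}$ in the principal type of $\mbox{\tt si'}$ to $\mbox{\tt B}$, so that the reduction of $\mbox{\tt s''} \, \mbox{\tt si'}$ mirrors exactly the ${\rm SLT}$-evaluation of the body $\mbox{\tt s0}$ under $\langle \rho_1, \rho_2\rangle$, yielding $\mbox{\tt u1}$ when ${[| \mbox{\tt si'} |]}_{\langle \rho_1, \rho_2\rangle} = 0$ and $\mbox{\tt u2}$ when it is $1$. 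Hence $\mbox{\tt s''} \, \mbox{\tt s1'} =_{\beta\eta{\rm c}} \mbox{\tt u1}$ and $\mbox{\tt s''} \, \mbox{\tt s2'} =_{\beta\eta{\rm c}} \mbox{\tt u2}$.

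The desired term is then $\mbox{\tt t} \equiv \mbox{\tt fn y=>s''(LDTr\_A y)}$. Its poly-typability by $\mbox{\tt A->B}$ with respect to $\mbox{\tt s1}$ and $\mbox{\tt s2}$ follows from Proposition~\ref{prop-PolyTypability} applied to $\mbox{\tt LDTr\_A}$ and $\mbox{\tt s''}$, and the required equalities $\mbox{\tt t} \, \mbox{\tt si} =_{\beta\eta{\rm c}} \mbox{\tt ui}$ are immediate by a single $\beta$-step.

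The hard part will be verifying the poly-typability of $\mbox{\tt s''}$ at $\mbox{\tt A0->B}$ with respect to both $\mbox{\tt s1'}$ and $\mbox{\tt s2'}$: the two terms may have strictly different principal types (sharing only the common instance $\mbox{\tt A0}$), and the arguments $\mbox{\tt a}_{i}, \mbox{\tt H}_{j}$ are themselves only poly-typable, not outright typable, at $\mbox{\tt B}$ and $\mbox{\tt B} \mbox{\tt ->} \cdots \mbox{\tt ->} \mbox{\tt B} \mbox{\tt ->} \mbox{\tt B}$. Hence for each of the two applications one must separately exhibit a type substitution that uniformly specialises every $\mbox{\tt 'a}_{ij}$ in the principal type of $\mbox{\tt si'}$ to $\mbox{\tt B}$, while simultaneously specialising the principal types of the $\mbox{\tt H}_{j}$ and $\mbox{\tt a}_{i}$ to the required instances. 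This is consistent precisely because every $\mbox{\tt 'a}_{ij}$ occurs once positively and once negatively in the principal type of $\mbox{\tt si'}$---exactly the flexibility that motivated the introduction of poly-types.
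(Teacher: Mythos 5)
Your construction is essentially identical to the paper's proof: both apply $\mbox{\tt LDTr\_A}$, separate the resulting ${\rm SLT}$ terms via Proposition~\ref{propSLT-basic}, substitute $\mbox{\tt u1}$/$\mbox{\tt u2}$ for the first-order variables according to $\rho_1$ and poly-typed projections/constant functions (from Lemmas~\ref{lemmaProjectionLemma} and~\ref{lemmaConstantFunctionLemma}) for the second-order variables according to $\rho_2$, and invoke Proposition~\ref{prop-PolyTypability} for poly-typability of the composite. Your explicit flagging of the poly-typability verification, and the observation that it works because each $\mbox{\tt 'a}_{ij}$ occurs exactly once positively and once negatively, is a point the paper passes over quickly but is the same underlying argument.
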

\begin{proof}
The term ${\tt t}$ that we are looking for has the following form:
\[
\mbox{\tt fun t x0 = LDTr\_A} \, \, \mbox{\tt x0} \, \overbrace{\mbox{\tt v1} \, \, \cdots \, \, \mbox{\tt vn}}^{n} \, \, \overbrace{\mbox{\tt w1} \, \, \cdots \, \, \mbox{\tt wm}}^{m} \mbox{\tt ;}
\]
By Proposition~\ref{propLDT-basic}, we have $\mbox{\tt LDTr\_A} \, \, \mbox{\tt s1} \neq_{\beta \eta {\rm c}} \mbox{\tt LDTr\_A} \, \, \mbox{\tt s2}$.
Then since $\mbox{\tt LDTr\_A} \, \, \mbox{\tt s1}$ and $\mbox{\tt LDTr\_A} \, \, \mbox{\tt s2}$ are typable by a common type
with order less than four, as observed before, 
they are identified with terms $s_1$ and $s_2$ in ${\rm SLT}$ respectively such that $s_1 \neq s_2$. 
Then by Proposition~\ref{propSLT-basic}, 
there are a variable assignment $\rho_1$ and a second-order variable assignment $\rho_2$
such that ${[| s_1 |]}_{\langle \rho_1, \rho_2 \rangle} \neq {[| s_2 |]}_{\langle \rho_1, \rho_2 \rangle}$.
Then following $\rho_1$, we choose ${\tt u1}$ or ${\tt u2}$ as the subterm $\mbox{\tt vi}$ (with type ${\tt B}$) of ${\tt t}$ for each $i \, (1 \le i \le n)$ 
and following $\rho_2$, we choose a constant function or a projection as the subterm $\mbox{\tt wj}$   
(with poly-type $\overbrace{\mbox{\tt B->} \cdots \mbox{\tt ->B}}^{k_j} \mbox{\tt ->B}$) of $\mbox{\tt t}$ for each $j \, (1 \le j \le m)$.
These constant functions and projections are obtained using Projection and Constant Function Lemmas.
Note that these constant functions and projections can be composed by Proposition~\ref{prop-PolyTypability}
such that the closed term {\tt t} is poly-typable appropriately. 
It is obvious that the term ${\tt t}$ has the desired properties. 
$\Box$
\end{proof}
\begin{remark}
  Theorem~\ref{thmStrongTypedBohmTheorem} can be considered as a strong version of Corollary 6 in \cite{Mat07}.
  While Corollary 6 in \cite{Mat07} uses only uniform type instantiation,
  Theorem~\ref{thmStrongTypedBohmTheorem} uses poly-types.
  We can not prove Theorem~\ref{thmStrongTypedBohmTheorem} using only uniform type instantiation. 
  Appendix~\ref{secWhyNeedPoly-Types} gives a discussion of this matter. 
\end{remark}
\begin{corollary}
\label{corCopyTerm}
Let ${\tt s1}$ and ${\tt s2}$ be two closed terms of ${\tt A}$.
Then there is a closed term ${\tt Copy\_A\_n}$
such that 
\[
\mbox{\tt Copy\_A\_n} \, \, \mbox{\tt s1} =_{\beta \eta {\rm c}} \mbox{\tt (s1,} \cdots \mbox{\tt ,s1)}
\quad 
\mbox{\tt Copy\_A\_n} \, \, \mbox{\tt s2} =_{\beta \eta {\rm c}} \mbox{\tt (s2,} \cdots \mbox{\tt ,s2)}
\]
where 
\mbox{\tt s1} and \mbox{\tt s2}
occur in 
\mbox{\tt (s1,} $\cdots$ \mbox{\tt ,s1)}
and 
\mbox{\tt (s2,} $\cdots$ \mbox{\tt ,s2)}
$n$ times respectively. 
\end{corollary}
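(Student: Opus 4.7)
The plan is to reduce the corollary to a direct invocation of Theorem~\ref{thmStrongTypedBohmTheorem}. Let $\mbox{\tt B}$ denote the iterated tensor type $\mbox{\tt A*A*}\cdots\mbox{\tt *A}$ (say, right-associated, with $n$ factors), and let $\mbox{\tt u1} := \mbox{\tt (s1, s1,}\ldots\mbox{\tt ,s1)}$ and $\mbox{\tt u2} := \mbox{\tt (s2, s2,}\ldots\mbox{\tt ,s2)}$ be the corresponding $n$-fold right-nested pairs. These are bona fide closed, well-typed terms of $\mbox{\tt B}$ in the linear $\lambda$-calculus: the pair-introduction rule combines judgments with disjoint contexts, and since both $\mbox{\tt s1}$ and $\mbox{\tt s2}$ are closed their contexts are empty, so iterating the rule $n-1$ times produces the tuple with empty context. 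The linearity side-condition constrains occurrences of \emph{variables}, not occurrences of entire closed subterms, so nothing prevents writing $\mbox{\tt s1}$ (resp.\ $\mbox{\tt s2}$) at $n$ distinct syntactic positions.

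If $\mbox{\tt s1} \neq_{\beta\eta{\rm c}} \mbox{\tt s2}$, applying Theorem~\ref{thmStrongTypedBohmTheorem} to the types $\mbox{\tt A}$ and $\mbox{\tt B}$ together with the data $(\mbox{\tt s1}, \mbox{\tt s2}, \mbox{\tt u1}, \mbox{\tt u2})$ yields a closed term, poly-typable by $\mbox{\tt A->B}$, mapping $\mbox{\tt s1}$ to $\mbox{\tt u1}$ and $\mbox{\tt s2}$ to $\mbox{\tt u2}$ up to $=_{\beta\eta{\rm c}}$; take this term as $\mbox{\tt Copy\_A\_n}$. If instead $\mbox{\tt s1} =_{\beta\eta{\rm c}} \mbox{\tt s2}$ (the statement does not require them to differ), then $\mbox{\tt u1} =_{\beta\eta{\rm c}} \mbox{\tt u2}$ and Lemma~\ref{lemmaConstantFunctionLemma} applied to $(\mbox{\tt A}, \mbox{\tt B}, \mbox{\tt u1})$ already supplies a suitable constant term $\mbox{\tt Copy\_A\_n}$.

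There is essentially no obstacle: once the Strong Typed B\"ohm Theorem is available in the full generality in which it is stated (arbitrary output type $\mbox{\tt B}$, in particular tensor types), the corollary is immediate. The only point requiring a line of justification is the well-formedness of the tuples $\mbox{\tt (s1,}\ldots\mbox{\tt ,s1)}$ and $\mbox{\tt (s2,}\ldots\mbox{\tt ,s2)}$ as closed linear terms, which is the variables-versus-closed-subterms observation above.
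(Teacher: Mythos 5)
Your proposal is correct and is essentially the paper's own proof: instantiate Theorem~\ref{thmStrongTypedBohmTheorem} with $\mbox{\tt B} = \mbox{\tt A*}\cdots\mbox{\tt *A}$ and $\mbox{\tt u1}, \mbox{\tt u2}$ the $n$-fold tuples. Your extra handling of the degenerate case $\mbox{\tt s1} =_{\beta\eta{\rm c}} \mbox{\tt s2}$ via Lemma~\ref{lemmaConstantFunctionLemma} is a reasonable additional care (the corollary as stated does not say ``different''), but otherwise the argument coincides with the paper's.
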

\begin{proof}
In Theorem~\ref{thmStrongTypedBohmTheorem}, 
one chooses $\overbrace{\mbox{\tt A} \mbox{\tt *} \cdots \mbox{\tt *} \mbox{\tt A}}^{n}$ as $\mbox{\tt B}$, 
and then 
\mbox{\tt (s1,} $\cdots$ \mbox{\tt ,s1)}
and 
\mbox{\tt (s2,} $\cdots$ \mbox{\tt ,s2)}
as 
\mbox{\tt u1}
and 
\mbox{\tt u2}
respectively.
$\Box$
\end{proof}
The next theorem claims that
in a limited situation
we can obtain a closed term representing a function from closed terms of a type 
to closed terms that may not be typable by the same implicational type, but are {\it poly-typable} by the type. 
\begin{theorem}[Poly-type Version of Strong Typed B\"{o}hm Theorem]
\label{thmPolytimeVersionOfStrongTypedBohmTheorem}
Let ${\tt s1}$ and ${\tt s2}$ denote two different closed terms with type ${\tt A}$, 
and ${\tt u1}$ and ${\tt u2}$ denote two different closed terms which are poly-typable by
$\mbox{\tt A0->B}$ w.r.t. two closed terms ${\tt r1}$ and ${\tt r2}$ with type ${\tt A0}$
such that 
$\{ \mbox{\tt u1} \, \, \mbox{\tt r1}, \, \mbox{\tt u1} \, \, \mbox{\tt r2}, \, \mbox{\tt u2} \, \, \mbox{\tt r1}, \, \mbox{\tt u2} \, \, \mbox{\tt r2}  \}$ is
a set of one or two closed terms (with type $B$). 
Then there is a closed term ${\tt t}$ that is poly-typable by $\mbox{\tt A->A0->B}$ such that
\[
\mbox{\tt t} \, \, \mbox{\tt s1} \, \, \mbox{\tt ri} =_{\beta \eta {\rm c}} \mbox{\tt u1} \, \, \mbox{\tt ri}
\, \, \mbox{and} \, \, 
\mbox{\tt t} \, \, \mbox{\tt s2} \, \, \mbox{\tt ri} =_{\beta \eta {\rm c}} \mbox{\tt u2} \, \, \mbox{\tt ri}
\]
for each $i \in \{ 1, 2 \}$. 
\end{theorem}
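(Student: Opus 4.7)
The plan is to build $\mbox{\tt t}$ by a case analysis on whether $\mbox{\tt r1}$ and $\mbox{\tt r2}$ are $=_{\beta \eta {\rm c}}$-equal, reducing each case to Theorem~\ref{thmStrongTypedBohmTheorem} together with the Projection Lemma (Lemma~\ref{lemmaProjectionLemma}) and the Copy Corollary (Corollary~\ref{corCopyTerm}).

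If $\mbox{\tt r1} =_{\beta \eta {\rm c}} \mbox{\tt r2}$, let $\mbox{\tt r}$ be this common value. Then $\mbox{\tt ui}\,\mbox{\tt rj} =_{\beta \eta {\rm c}} \mbox{\tt ui}\,\mbox{\tt r}$, so the desired behaviour depends only on the first argument. I invoke Theorem~\ref{thmStrongTypedBohmTheorem} on the inputs $\mbox{\tt s1}, \mbox{\tt s2}$ with outputs the closed $\mbox{\tt B}$-terms $\mbox{\tt u1}\,\mbox{\tt r}, \mbox{\tt u2}\,\mbox{\tt r}$ to obtain $\mbox{\tt t0}$ poly-typable by $\mbox{\tt A->B}$ with $\mbox{\tt t0}\,\mbox{\tt si} =_{\beta \eta {\rm c}} \mbox{\tt ui}\,\mbox{\tt r}$. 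Using $\mbox{\tt proj}$ from Lemma~\ref{lemmaProjectionLemma} at $\mbox{\tt A0->(B->B)}$, I set
\[
\mbox{\tt t} \, := \, \mbox{\tt fn x => fn y => proj y (t0 x)}.
\]
Proposition~\ref{prop-PolyTypability} then yields that $\mbox{\tt t}$ is poly-typable by $\mbox{\tt A->A0->B}$, and a direct reduction gives $\mbox{\tt t}\,\mbox{\tt si}\,\mbox{\tt rj} =_{\beta \eta {\rm c}} \mbox{\tt ui}\,\mbox{\tt rj}$.

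If $\mbox{\tt r1} \neq_{\beta \eta {\rm c}} \mbox{\tt r2}$, I invoke Corollary~\ref{corCopyTerm} to obtain $\mbox{\tt Copy\_A0\_2}$ with $\mbox{\tt Copy\_A0\_2}\,\mbox{\tt rj} =_{\beta \eta {\rm c}} (\mbox{\tt rj},\mbox{\tt rj})$ for each $j$, and for $i = 1, 2$ define
\[
\mbox{\tt ci} \, := \, \mbox{\tt fn p => let val (y1, y2) = p in proj y2 (ui y1) end}.
\]
Instantiating $\mbox{\tt ui}$'s principal type to $\mbox{\tt A0->B}$ (possible by the poly-typability hypothesis) gives an ordinary typing $\mbox{\tt ci} \colon \mbox{\tt A0*A0->B}$, and $\mbox{\tt c1} \neq_{\beta \eta {\rm c}} \mbox{\tt c2}$ follows from $\mbox{\tt u1} \neq_{\beta \eta {\rm c}} \mbox{\tt u2}$ by uniqueness of $\beta\eta{\rm c}$-normal forms. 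Applying Theorem~\ref{thmStrongTypedBohmTheorem} to $\mbox{\tt s1}, \mbox{\tt s2}$ with outputs $\mbox{\tt c1}, \mbox{\tt c2}$ yields $\mbox{\tt g}$ poly-typable by $\mbox{\tt A->(A0*A0->B)}$ with $\mbox{\tt g}\,\mbox{\tt si} =_{\beta \eta {\rm c}} \mbox{\tt ci}$. I then take
\[
\mbox{\tt t} \, := \, \mbox{\tt fn x => fn y => g x (Copy\_A0\_2 y)},
\]
so that $\mbox{\tt t}\,\mbox{\tt si}\,\mbox{\tt rj} =_{\beta \eta {\rm c}} \mbox{\tt g}\,\mbox{\tt si}\,(\mbox{\tt rj},\mbox{\tt rj}) =_{\beta \eta {\rm c}} \mbox{\tt ci}\,(\mbox{\tt rj},\mbox{\tt rj}) =_{\beta \eta {\rm c}} \mbox{\tt ui}\,\mbox{\tt rj}$, and poly-typability of $\mbox{\tt t}$ by $\mbox{\tt A->A0->B}$ follows from Proposition~\ref{prop-PolyTypability}.

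The delicate point is ensuring that $\mbox{\tt c1}, \mbox{\tt c2}$ really qualify as outputs of Theorem~\ref{thmStrongTypedBohmTheorem}. Although the occurrences of $\mbox{\tt u1}, \mbox{\tt u2}$ inside $\mbox{\tt ci}$ are only poly-typed, the hypothesis says $\mbox{\tt ui}$'s principal type admits $\mbox{\tt A0->B}$ as an instance, so a genuine typing derivation $\vdash \mbox{\tt ci} \colon \mbox{\tt A0*A0->B}$ exists; this is exactly the ordinary typing needed to invoke Theorem~\ref{thmStrongTypedBohmTheorem} in the non-degenerate case, after which Proposition~\ref{prop-PolyTypability} carries the poly-typing through the final composition with $\mbox{\tt Copy\_A0\_2}$.
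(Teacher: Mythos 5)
Your reduction takes a genuinely different route from the paper's proof (which re-runs the $\mbox{\tt LDTr\_A}$ construction and performs a positional case analysis on the two resulting labeled trees), and your first case ($\mbox{\tt r1} =_{\beta\eta{\rm c}} \mbox{\tt r2}$) is essentially fine, but the main case rests on a claim that is false and that the paper's entire machinery exists to avoid. The pivotal step is your assertion that the poly-typability hypothesis lets you instantiate ${\rm PT}(\mbox{\tt ui})$ to $\mbox{\tt A0->B}$ and thereby obtain an ordinary derivation $\vdash \mbox{\tt ci} : \mbox{\tt A0*A0->B}$. Definition~\ref{def-PolyTypability} does not say that $\mbox{\tt A0->B}$ is an instance of ${\rm PT}(\mbox{\tt ui})$; it says only that some $\mbox{\tt A0''->B}$ is, where $\mbox{\tt A0''}$ is an instance of ${\rm PT}(\mbox{\tt ri})$ that may be very different from $\mbox{\tt A0}$ and may differ between $\mbox{\tt u1}$ and $\mbox{\tt u2}$. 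The paper's own example makes this concrete: $\mbox{\tt Not\_POLY}$ is poly-typable by $\MBB{B}_{\rm HM}\mbox{\tt ->}\MBB{B}_{\rm HM}$ but not typable by it, and the two instantiations ${\tt A0}$ and ${\tt A1}$ it needs for $\mbox{\tt True}$ and $\mbox{\tt False}$ admit no common instance. Consequently $\mbox{\tt c1}$ and $\mbox{\tt c2}$ need not be closed terms of one common type, so Theorem~\ref{thmStrongTypedBohmTheorem} --- whose hypothesis is that the two output terms are closed terms of one and the same type --- cannot be invoked on them, and the subsequent composition with $\mbox{\tt Copy\_A0\_2}$ has nothing to compose with. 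Note also that if the $\mbox{\tt ui}$ were genuinely typable by $\mbox{\tt A0->B}$, Theorem~\ref{thmPolytimeVersionOfStrongTypedBohmTheorem} would follow from Theorem~\ref{thmStrongTypedBohmTheorem} in one line by taking its output type to be $\mbox{\tt A0->B}$; the only content of the theorem is the properly poly-typed case, which is exactly the case used for the {\it and} gate in Theorem~\ref{thmFunctionalCompleteness}, where $\mbox{\tt u1}$ is a constant function that is poly-typable but not typable by $\mbox{\tt A->A}$ while $\mbox{\tt u2}=\mbox{\tt I}$.

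The paper circumvents this obstacle by never packaging $\mbox{\tt u1}$ and $\mbox{\tt u2}$ into terms that must share an ordinary type: it applies $\mbox{\tt LDTr\_A}$ to $\mbox{\tt s1}$ and $\mbox{\tt s2}$, regards the two results as linearly labeled trees $T_1$ and $T_2$, and by a case analysis on the relative positions of certain nodes assigns constant, identity, negation, and projection gadgets (built from Lemmas~\ref{lemmaProjectionLemma} and~\ref{lemmaConstantFunctionLemma} and Theorem~\ref{thmStrongTypedBohmTheorem}) to the leaves and internal nodes, threading the second argument in through $\mbox{\tt Copy\_A0\_n}$. To salvage your shortcut you would have to show that $\mbox{\tt c1}$ and $\mbox{\tt c2}$ admit a common ordinary type, and this already fails for the intended applications.
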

\begin{proof}
By Proposition~\ref{propLDT-basic}
there is a linear $\lambda$-term ${\tt LDTr\_A}$ such that 
${\tt LDTr\_A} \,  \, {\tt s1} \neq_{\beta \eta {\rm c}} {\tt LDTr\_A} \,  \, {\tt s2}$ 
and these terms can be regarded as different linearly labeled trees $T_1$ and $T_2$ respectively.
In the rest of the proof,
we assign a poly-typable first-order function to each leaf (which represented a first order variable in our proof of Theorem~\ref{thmStrongTypedBohmTheorem})
and a poly-typable first-order or second-order function to each internal node (which represented a second order variable in our proof of Theorem~\ref{thmStrongTypedBohmTheorem})
in $T_1$ and $T_2$, 
following the structure of trees $T_1$ and $T_2$.
The purpose is to construct a closed term {\tt t} such that
each of $\mbox{\tt t} \, \, \mbox{\tt s1}$ and $\mbox{\tt t} \, \, \mbox{\tt s2}$
represents a one argument boolean function satisfying the specification of the theorem.
The main tools are
Projection and Constant Function Lemmas and the Strong Typed B\"{o}hm Theorem.
We have two cases according to the structure of $T_1$ and $T_2$.
\begin{itemize}
\item The case where
both $T_1$ an $T_2$ have an $n$-ary second order variable $F$ $\, (n \ge 2)$ and a first or second order variable $G$ 
such that 
$G$ is above $F$ in both $T_1$ and $T_2$ and 
the position of $G$ in $T_1$ is different from that of $T_2$: \\
Furthermore, the case is divided into three cases. 
We assume that we choose $F$ to be the nearest one to $G$ in $T_1$ and 
the variable in $T_2$ that has the same position as $G$ in $T_1$ is $H$. 
\begin{itemize}
\item The case where there is a path from the root to a leaf, including $G$ in $T_1$ 
  such that the path does not include $H$,
  and
  when we interchange $G$ and $T_1$ with $H$ and $T_2$ respectively, the same thing happens: 
  \\
Without loss of generality, this case can be shown as Figure~\ref{fig-Memo-Figure-1}.
The term ${\tt t}$ that we are looking for has the following form:
\[
\begin{array}{l}
\mbox{\tt fun t x0 y0 = } \\
\quad \quad \mbox{\tt let val } \mbox{\tt (x1,} \, \, \cdots \, \, \mbox{\tt ,xn)} = \mbox{\tt Copy\_A0\_n}  \, \, \mbox{\tt y0} \, \, \mbox{\tt in} \\
\quad \quad \quad \quad \mbox{\tt LDTr\_A} \, \, \mbox{\tt x0} \, \overbrace{\mbox{\tt (v1 x1)} \, \, \cdots \, \, \mbox{\tt (vn xn)}}^{n} \, \, \overbrace{\mbox{\tt w1} \, \, \cdots \, \, \mbox{\tt wm}}^{m} \, \, \mbox{\tt end;}
\end{array}
\]
where 
the subterm $\mbox{\tt vi}$ that is poly-typable by $\mbox{\tt A0} \mbox{\tt ->} \mbox{\tt B}$
is obtained using Theorem~\ref{thmStrongTypedBohmTheorem}, representing
a surjection from $\{ \mbox{\tt r1}, \mbox{\tt r2} \}$ to 
one or two element set
$\{ \mbox{\tt u1} \, \, \mbox{\tt r1}, \, \mbox{\tt u1} \, \, \mbox{\tt r2}, \, \mbox{\tt u2} \, \, \mbox{\tt r1}, \, \mbox{\tt u2} \, \, \mbox{\tt r2}  \}$ 
for each $i \, (1 \le i \le n)$.
The subterm $\mbox{\tt wj}$ 
that is poly-typable by 
$\overbrace{\mbox{\tt B->} \cdots \mbox{\tt ->B}}^{k_j} \mbox{\tt ->B}$ for each $j \, (1 \le j \le m)$
is constructed 
from Projection Lemma w.r.t. an appropriate position 
except for $G$ and $H$.
For example the first argument projection is assigned to $F$ in Figure~\ref{fig-Memo-Figure-1}.
Then $G$ and $H$ are constructed in the following two steps: 
\begin{enumerate}
\item 
  First we construct terms $\mbox{\tt mj}$ with type $\mbox{\tt B->B}$
  using the Strong Typed B\"{o}hm Theorem (Theorem~\ref{thmStrongTypedBohmTheorem}). 
  The functions for $G$ and $H$ are
  the constant, identity, or negation functions, 
  depending on $\mbox{\tt u1}$ and $\mbox{\tt u2}$.
  Note that in order to represent the negation function we need the Strong Typed B\"{o}hm Theorem. 
\item 
Second from using $\mbox{\tt mj}$, we construct $\mbox{\tt wj}$ using Constant Function Lemma in order to discard
the unnecessary arguments.
The terms corresponding to $G$ and $H$ in Figure~\ref{fig-Memo-Figure-1} discard
the second argument. 
\end{enumerate}
\begin{figure}[htbp]
 \begin{center}
  \includegraphics[scale=0.5]{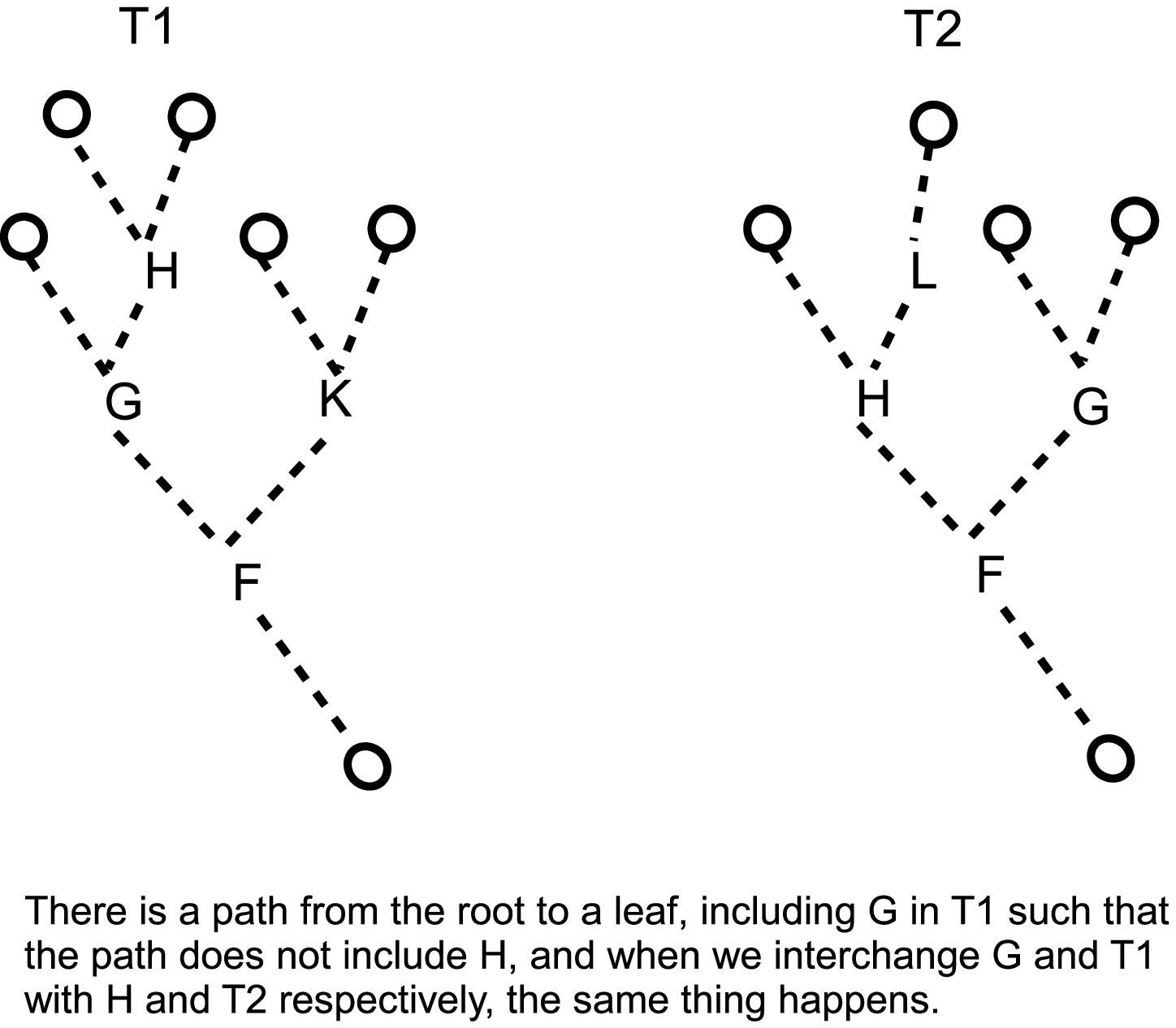}
 \end{center}
 \caption{Two different linearly labeled trees (1)}
 \label{fig-Memo-Figure-1}
\end{figure}
\item The case where (i) there is no any path from the root to a leaf, including $G$ in $T_1$ such that the path does not include $H$ and (ii) there is a path from the root to a leaf, including $H$ in $T_2$ such that the path does not include $G$: \\
  We assume that 
  the variable in $T_1$ that has the same position as $G$ in $T_2$ is $K$. 
  In this case, the following additional properties hold: 
  \begin{itemize}
  \item [(iii)] There is no any path from the root to a leaf, including $G$ in $T_2$ such that the path does not include $K$.
  \item [(iv)] there is a path from the root to a leaf, including $K$ in $T_1$ such that the path does not include $G$.
  \end{itemize}
  Otherwise, we can apply the immediately above case (replace $G$ and $H$ by $K$ and $G$ respectively). 
  In the case, $T_1$ and $T_2$ have the form of Figure~\ref{fig-Memo-Figure-2} or Figure~\ref{fig-Memo-Figure-3}
  without loss of generality.
  First we consider the case of Figure~\ref{fig-Memo-Figure-2}.
The term ${\tt t}$ that we are looking for has the following form:
\[
\begin{array}{l}
\mbox{\tt fun t x0 y0 = } \\
\quad \quad \mbox{\tt let val } \mbox{\tt (x1,} \, \, \cdots \, \, \mbox{\tt ,xn)} = \mbox{\tt Copy\_A0\_n}  \, \, \mbox{\tt y0} \, \, \mbox{\tt in} \\
\quad \quad \quad \quad \mbox{\tt LDTr\_A} \, \, \mbox{\tt x0} \, \overbrace{\mbox{\tt (v1 x1)} \, \, \cdots \, \, \mbox{\tt (vn xn)}}^{n} \, \, \overbrace{\mbox{\tt w1} \, \, \cdots \, \, \mbox{\tt wm}}^{m} \, \, \mbox{\tt end;}
\end{array}
\]
where 
the subterm $\mbox{\tt vi}$ that is poly-typable by $\mbox{\tt A0} \mbox{\tt ->} \mbox{\tt B}$
is
obtained using the Strong Typed B\"{o}hm Theorem (Theorem~\ref{thmStrongTypedBohmTheorem})
for each $i \, (1 \le i \le n)$,
representing a surjection from $\{ \mbox{\tt r1}, \mbox{\tt r2} \}$ to
one or two element set
$\{ \mbox{\tt u1} \, \, \mbox{\tt r1}, \, \mbox{\tt u1} \, \, \mbox{\tt r2}, \, \mbox{\tt u2} \, \, \mbox{\tt r1}, \, \mbox{\tt u2} \, \, \mbox{\tt r2}  \}$ 
and 
the subterm $\mbox{\tt wj}$ 
is poly-typable by
$\overbrace{\mbox{\tt B->} \cdots \mbox{\tt ->B}}^{k_j} \mbox{\tt ->B}$ for each $j \, (1 \le j \le m)$
obtained from Projection Lemma 
except that four terms assigned to $F$, $G$, $H$, and $K$ are selected according to the table immediately below
(and then Constant Function Lemma is applied in order to discard the unnecessary arguments):
\[
\begin{array}{|l|l|l|l|l|l|}
\hline 
u_1 & u_2 & \mbox{argument} & G & H & K \\ 
    &     & \mbox{choice of} \, \, F & & & \\ 
\hline \hline
\mbox{const.} & \mbox{const.} & \mbox{left} & \mbox{const.} & \mbox{const.} & \mbox{don't care} \\
\hline
\mbox{const.} & \mbox{id.} & \mbox{left} & \mbox{const.} & \mbox{id.} & \mbox{don't care} \\
\hline
\mbox{const.} & \mbox{neg.} & \mbox{left} & \mbox{const.} & \mbox{neg.} & \mbox{don't care} \\
\hline
\mbox{id.} & \mbox{const.} & \mbox{right} & \mbox{const.} & \mbox{don't care} & \mbox{id.} \\
\hline
\mbox{neg.} & \mbox{const} & \mbox{right} & \mbox{const.} & \mbox{don't care} & \mbox{neg.} \\
\hline
\mbox{id.} & \mbox{id.} & \mbox{left} & \mbox{id.} & \mbox{id} & \mbox{id.} \\
\hline
\mbox{neg.} & \mbox{neg.} & \mbox{left} & \mbox{id.} & \mbox{neg.} & \mbox{dont' care} \\
\hline
\mbox{id.} & \mbox{neg.} & \mbox{left} & \mbox{neg.} & \mbox{neg.} & \mbox{dont' care} \\
\hline
\mbox{neg.} & \mbox{id.} & \mbox{left} & \mbox{neg.} & \mbox{id.} & \mbox{dont' care} \\
\hline
\end{array}
\]
where
id., neg., and const. mean
the identity, negation, and constant functions respectively.
The term ``don't care'' means
that we can choose any one argument function for that place.

In the case of Figure~\ref{fig-Memo-Figure-3}, 
the form of the term ${\tt t}$ is the same as Figure~\ref{fig-Memo-Figure-2}. 
The only difference is that we assign one argument functions to the subterms $\mbox{\tt vi}$s corresponding to $x$ and $y$, according to the instructions for $H$ and $K$ in the above table respectively.
We can do the assignment using Theorem~\ref{thmStrongTypedBohmTheorem}.

\begin{figure}[htbp]
 \begin{center}
  \includegraphics[scale=0.5]{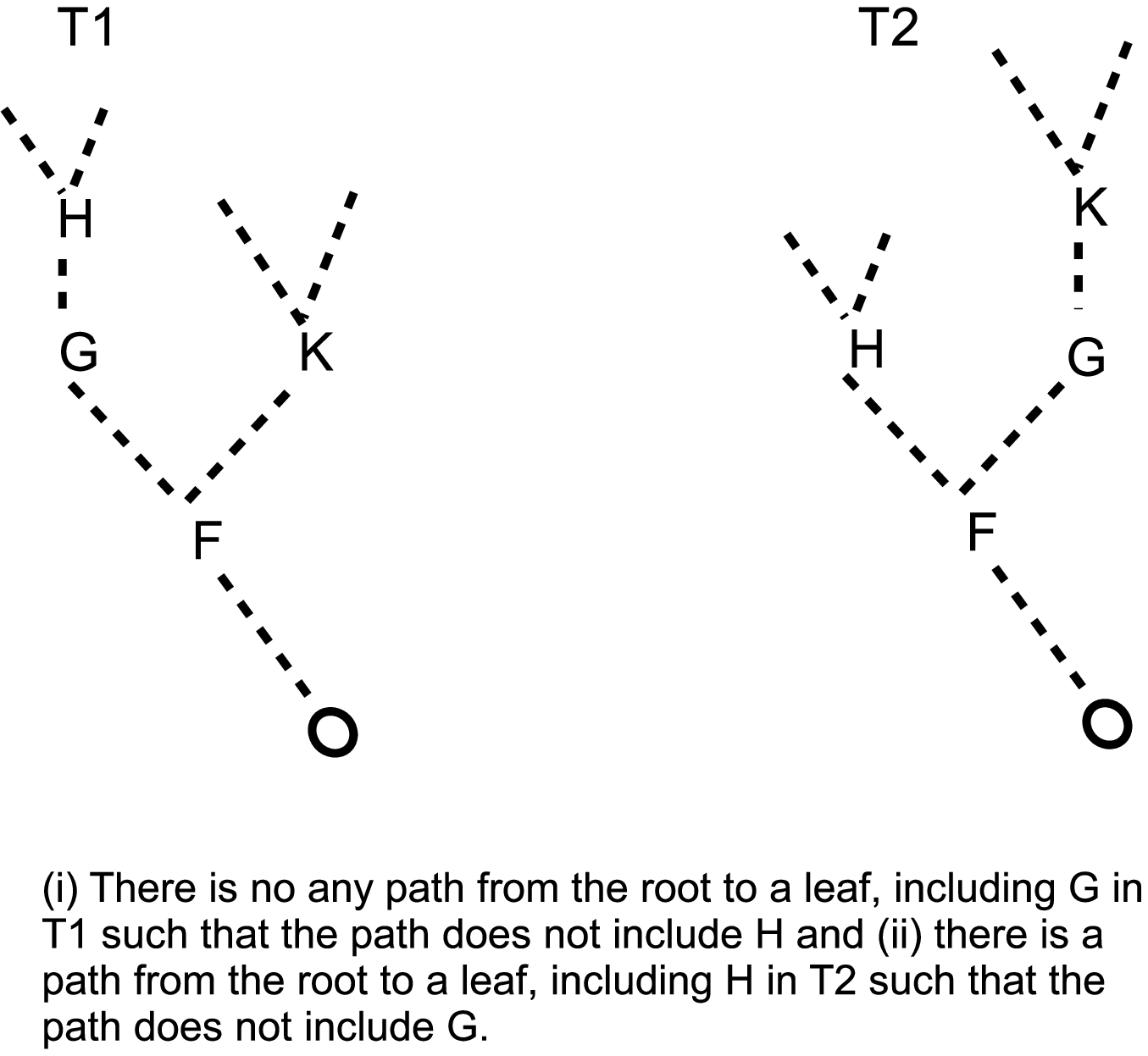}
 \end{center}
 \caption{Two different linearly labeled trees (2)}
 \label{fig-Memo-Figure-2}
\end{figure}
\begin{figure}[htbp]
 \begin{center}
  \includegraphics[scale=0.5]{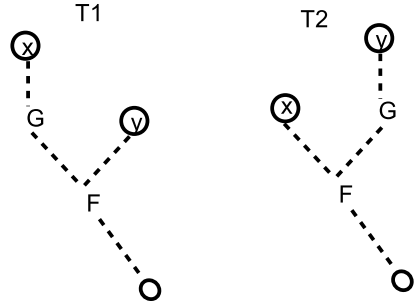}
 \end{center}
 \caption{Two different linearly labeled trees (3)}
 \label{fig-Memo-Figure-3}
\end{figure}
\item Otherwise: \\
  In this case,
  any path from the root to a leaf including $G$ (resp. $H$) in $T_1$ (resp. $T_2$) includes $H$ (resp. $G$) 
  above $G$ (resp. $H$).
  Without loss of generality, this case can be shown as Figure~\ref{fig-Memo-Figure-4}.
  The term ${\tt t}$ that we are looking for has the following form:
\[
\begin{array}{l}
\mbox{\tt fun t x0 y0 = } \\
\quad \quad \quad \quad \mbox{\tt LDTr\_A} \, \, \mbox{\tt x0} \, \overbrace{\mbox{\tt v1} \, \, \cdots \, \, \mbox{\tt vn}}^{n} \, \, \overbrace{\mbox{\tt w1} \, \, \cdots \, \, \mbox{\tt wm}}^{m} \, \, \mbox{\tt (t0 y0);}
\end{array}
\]
where 
${\mbox{\tt t0}}$
that is poly-typable by 
$\mbox{\tt A0->B}$
is obtained from the Strong Typed B\"{o}hm Theorem (Theorem~\ref{thmStrongTypedBohmTheorem})
which represents a surjection from $\{ \mbox{\tt r1}, \mbox{\tt r2} \}$ to
one or two element set
$\{ \mbox{\tt u1} \, \, \mbox{\tt r1}, \, \mbox{\tt u1} \, \, \mbox{\tt r2}, \, \mbox{\tt u2} \, \, \mbox{\tt r1}, \, \mbox{\tt u2} \, \, \mbox{\tt r2}  \}$, 
the subterm $\mbox{\tt vi}$ is poly-typable by $\mbox{\tt B} \mbox{\tt ->} \mbox{\tt B}$
obtained from Constant Function Lemma
for each $i \, (1 \le i \le n)$,
and 
the subterm $\mbox{\tt wj}$ 
has type $\overbrace{\mbox{\tt C1->} \cdots \mbox{\tt -> Ckj}}^{k_j} \mbox{\tt ->D}$ 
for each $j \, (1 \le j \le m)$ where
$\mbox{\tt Ci}$ and $\mbox{\tt D}$ is poly-typable by $\mbox{\tt B} \mbox{\tt ->} \mbox{\tt B}$.
The subterm $\mbox{\tt wj}$ 
is constructed 
from Projection Lemma w.r.t. an appropriate position 
except for $G$ and $H$.
For example, in Figure~\ref{fig-Memo-Figure-4}, the first projection function is assigned to $F$. 
The terms $G$ and $H$ are constructed by the following two steps:
\begin{enumerate}
\item 
First we construct a term $\mbox{\tt mj}$ with type $\mbox{\tt D}$ using the Strong Typed B\"{o}hm Theorem (Theorem~\ref{thmStrongTypedBohmTheorem}). 
The functions for $G$ and $H$ are
the constant, identity, or negation functions, 
depending on $\mbox{\tt u1}$ and $\mbox{\tt u2}$.
Note that in order to represent the negation function we need the Strong Typed B\"{o}hm Theorem. 
\item 
Second from using $\mbox{\tt mj}$, we construct $\mbox{\tt wj}$ using Constant Function Lemma in order to discard
the unnecessary arguments.
\end{enumerate}
\begin{figure}[htbp]
 \begin{center}
  \includegraphics[scale=0.5]{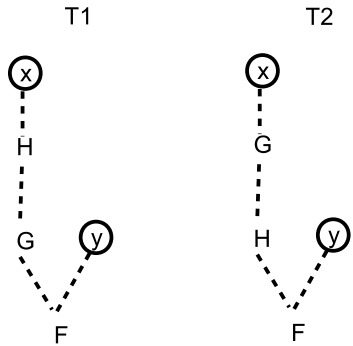}
 \end{center}
 \caption{Two different linearly labeled trees (4)}
 \label{fig-Memo-Figure-4}
\end{figure}
\end{itemize}
\item Otherwise: \\
The case is any of the degenerated versions of the cases above. 
We can apply the same discussion. 
\end{itemize}
$\Box$
\end{proof}

\section{Functional Completeness of Linear Types: An Application of Strong Typed B\"{o}hm Theorem}
Strong typed B\"{o}hm theorem for the linear $\lambda$-calculus is not a theoretical non-sense.
It has an algorithmic content and at least one application: functional completeness of linear types.
\begin{definition}
\label{defFunComp}
Let ${\tt A}$ be a type that has two different closed terms ${\tt s1}$ and ${\tt s2}$.
A function $f : { \{ 0, 1 \} }^{n} \to \{ 0, 1 \}$ is represented by 
a closed term ${\tt t}$ that is poly-typable by
$\overbrace{{\tt A} \mbox{\tt ->} \, \, \cdots \mbox{\tt ->} {\tt A}}^{n} \mbox{\tt ->} {\tt A}$
with regard to ${\tt s1}$ and ${\tt s2}$ 
if, for any $\langle x_1, \ldots, x_n \rangle \in { \{ 0, 1 \} }^{n}$ and $y \in \{ 0, 1 \}$ 
\[
f (x_1, \ldots, x_n) = y \, \, \Leftrightarrow \, \,
{\tt t} \, \, {\tt x1} \cdots {\tt xn} =_{\beta \eta {\rm c}} {\tt y}
\]
where
${\tt x1}, \ldots, {\tt xn}, {\tt y}$ are the images
of $x_1, \ldots, x_n, y$ under the map $\{ 0 \mapsto {\tt s1}, \, 1 \mapsto {\tt s2} \}$ respectively.
The type ${\tt A}$ is functionally complete with regard to ${\tt s1}$ and ${\tt s2}$ if
any function $f : { \{ 0, 1 \} }^{n} \to \{ 0, 1 \}$ is represented by 
a closed term with regard to ${\tt s1}$ and ${\tt s2}$.
\end{definition}
The following proposition is well-known. 
\begin{proposition}
\label{propDefFunComp}
A type ${\tt A}$ is functionally complete if and only if 
the Boolean {\it not} gate, the {\it and} gate, and 
the {\it duplicate} function, i.e., $\{ 0 \mapsto \langle 0, 0 \rangle, \, 1 \mapsto \langle 1, 1 \rangle \}$ 
are represented over ${\tt A}$. 
\end{proposition}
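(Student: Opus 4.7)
The plan is to prove the two directions separately, with the forward direction being essentially tautological and the reverse requiring a standard Boolean-algebra argument adapted to the linear setting. For the ``only if'' direction, note that the Boolean \emph{not} gate is the specific function $\{0 \mapsto 1, 1 \mapsto 0\}$ and the \emph{and} gate is the specific function sending $(x,y)$ to $x \cdot y$; both are Boolean functions in the sense of Definition~\ref{defFunComp}, so they are represented over ${\tt A}$ as soon as ${\tt A}$ is functionally complete. The duplicate function, strictly speaking, produces an output in ${\tt A} \ast {\tt A}$ rather than a single copy of ${\tt A}$, so it requires a mild extension of Definition~\ref{defFunComp} to pairs; but its representability is independent of functional completeness and is guaranteed directly by Corollary~\ref{corCopyTerm} with $n = 2$.

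For the ``if'' direction I would proceed as follows. Given an arbitrary $f : \{0,1\}^n \to \{0,1\}$, I first write $f$ in disjunctive normal form,
\[
f(x_1,\ldots,x_n) \;=\; \bigvee_{\vec a \in f^{-1}(1)} \; \bigwedge_{i=1}^{n} x_i^{a_i},
\]
with the convention $x_i^1 = x_i$ and $x_i^0 = \neg x_i$, and handle the corner case $f^{-1}(1) = \emptyset$ by the constant $0$, whose representation is a closed term by Lemma~\ref{lemmaConstantFunctionLemma} (or by Corollary~\ref{corConstantFunctionLemma}). The or gate is then defined by De Morgan, ${\tt or}(x,y) =_{\mathrm{def}} {\tt not}({\tt and}({\tt not}\ x, {\tt not}\ y))$, so the entire Boolean expression is built using only the representations of ${\tt not}$ and ${\tt and}$.

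The main obstacle is dealing with linearity: each input variable $x_i$ typically appears many times across the minterms, but the linear $\lambda$-calculus does not permit free duplication. Here the \emph{duplicate} function does the work. Before plugging $x_i$ into the Boolean expression I would, for each $i$, iterate duplicate (in cascade, unpacking each resulting pair via a \texttt{let val (y,z) = d in $\cdots$ end}) to produce exactly the number $k_i$ of copies of $x_i$ demanded by the DNF expression. Each copy is then routed to its intended literal slot, and the overall term is assembled by composing these duplications with the not/and/or sub-circuits. Composability of poly-typable terms is ensured by Proposition~\ref{prop-PolyTypability}, so the result is poly-typable by $\overbrace{{\tt A} \to \cdots \to {\tt A}}^{n} \to {\tt A}$, as required by Definition~\ref{defFunComp}. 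The bookkeeping of \texttt{let}-bindings and tensor constructors when cascading duplicates is the only fiddly part, but the construction is entirely routine once the linearity is accounted for; no further appeal to Theorem~\ref{thmStrongTypedBohmTheorem} is needed in this direction beyond what was already used to obtain ${\tt not}$, ${\tt and}$ and duplicate themselves.
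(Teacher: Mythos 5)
The paper offers no proof of this proposition at all: it is introduced with the single remark that it is ``well-known,'' and its content (specifically the ``if'' direction) is then invoked in the proof of Theorem~\ref{thmFunctionalCompleteness}. Your proposal therefore cannot match or diverge from the paper's argument; it supplies one where the paper has none, and what you supply is essentially the correct standard argument. Two small points are worth tightening. First, in the ``if'' direction your use of the \emph{canonical} DNF (every minterm containing all $n$ literals) is doing real work in the linear setting: it guarantees each input $x_i$ is used exactly $|f^{-1}(1)|\ge 1$ times, so cascaded duplicates give precisely the required fan-out and no erasure of inputs is ever needed outside the corner case $f^{-1}(1)=\emptyset$; you should say this explicitly, since with a non-canonical DNF an input could occur zero times and duplication alone would not suffice. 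Second, for that corner case you reach for Lemma~\ref{lemmaConstantFunctionLemma}, which makes the ``if'' direction depend on machinery beyond the three hypothesized gadgets; if you want the implication to be self-contained you can instead derive the constant-$0$ function from the gadgets themselves, e.g.\ duplicate $x_1$ and form ${\tt and}(d_1,{\tt not}\ d_2)$, then absorb each remaining $x_i$ by ${\tt and}$-ing with ${\tt or}(e_1,{\tt not}\ e_2)$ obtained from duplicating $x_i$. With those remarks the argument is complete; your observation that the ``only if'' direction is essentially degenerate (with duplicate falling outside Definition~\ref{defFunComp} and being guaranteed unconditionally by Corollary~\ref{corCopyTerm}) is also accurate and worth keeping.
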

So far Mairson \cite{Mai04} gave
the functional completeness 
of
type 
$\MBB{B}_{\rm HM} 
= 
\mbox{\tt 'a} \mbox{\tt ->} \mbox{\tt 'a} \mbox{\tt ->} 
\mbox{\tt (} \mbox{\tt 'a} \mbox{\tt ->} \mbox{\tt 'a} \mbox{\tt ->} \mbox{\tt 'a} \mbox{\tt )}
\mbox{\tt ->} \mbox{\tt 'a}$
with regard to the two closed terms.
Moreover 
van Horn and Mairson \cite{vanHM07}
gave the functional completeness
of
$\MBB{B}_{\rm TWIST} \mbox{\tt *} \MBB{B}_{\rm TWIST}$
with regard to its two closed terms, 
where $\MBB{B}_{\rm TWIST} = \mbox{\tt 'a} \mbox{\tt *} \mbox{\tt 'a} \mbox{\tt ->} \mbox{\tt 'a} \mbox{\tt *} \mbox{\tt 'a}$.
In fact, the following theorem holds. 
\begin{theorem}
\label{thmFunctionalCompleteness}
Let ${\tt A}$ be a type that has two different closed terms ${\tt s1}$ and ${\tt s2}$.
Then the type ${\tt A}$ is functionally complete with regard to ${\tt s1}$ and ${\tt s2}$.
\end{theorem}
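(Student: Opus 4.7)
The plan is to invoke Proposition~\ref{propDefFunComp}, which reduces functional completeness to representability of three specific items over ${\tt A}$: the Boolean \emph{not} gate, the \emph{and} gate, and the \emph{duplicate} function. Each of these then follows almost immediately from results established earlier in this paper, with the two-argument conjunction being the only nontrivial case.

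First I would handle the \emph{not} gate by a direct application of Theorem~\ref{thmStrongTypedBohmTheorem} with ${\tt B} = {\tt A}$ and with ${\tt u1} = {\tt s2}$, ${\tt u2} = {\tt s1}$: this yields a closed term ${\tt Not}$ poly-typable by $\mbox{\tt A->A}$ that swaps ${\tt s1}$ and ${\tt s2}$ up to $=_{\beta \eta {\rm c}}$. The \emph{duplicate} function is an immediate consequence of Corollary~\ref{corCopyTerm} with $n = 2$, which produces a closed term $\mbox{\tt Copy\_A\_2}$ satisfying $\mbox{\tt Copy\_A\_2} \, {\tt si} =_{\beta \eta {\rm c}} \mbox{\tt (si,si)}$ for $i \in \{1,2\}$.

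The main obstacle is the two-argument \emph{and} gate, which cannot be produced by a single application of Theorem~\ref{thmStrongTypedBohmTheorem} since the latter only separates on one input. The idea is to curry: viewing ${\rm and}({\tt s1}, -)$ as the constant function returning ${\tt s1}$ and ${\rm and}({\tt s2}, -)$ as the identity, I would appeal to Theorem~\ref{thmPolytimeVersionOfStrongTypedBohmTheorem} with $\mbox{\tt A0} = \mbox{\tt B} = \mbox{\tt A}$, ${\tt r1} = {\tt s1}$, ${\tt r2} = {\tt s2}$, and with ${\tt u1}$ the constant function returning ${\tt s1}$ (supplied by Lemma~\ref{lemmaConstantFunctionLemma}) and ${\tt u2}$ the identity ${\tt I}$. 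Both ${\tt u1}$ and ${\tt u2}$ are poly-typable by $\mbox{\tt A->A}$; they are distinct up to $=_{\beta \eta {\rm c}}$ since ${\tt u1} \, {\tt s2} =_{\beta \eta {\rm c}} {\tt s1} \neq {\tt s2} =_{\beta \eta {\rm c}} {\tt u2} \, {\tt s2}$; and the set $\{ {\tt u1} \, {\tt s1}, {\tt u1} \, {\tt s2}, {\tt u2} \, {\tt s1}, {\tt u2} \, {\tt s2} \}$ reduces to $\{ {\tt s1}, {\tt s2} \}$, satisfying the cardinality hypothesis. The resulting term ${\tt And}$, poly-typable by $\mbox{\tt A->A->A}$, satisfies ${\tt And} \, {\tt s1} \, {\tt sj} =_{\beta \eta {\rm c}} {\tt s1}$ and ${\tt And} \, {\tt s2} \, {\tt sj} =_{\beta \eta {\rm c}} {\tt sj}$, which is precisely the truth table of conjunction under the identification $0 \leftrightarrow {\tt s1}$, $1 \leftrightarrow {\tt s2}$.

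With the three basic gates in hand, Proposition~\ref{propDefFunComp} concludes the proof. A small point to verify along the way is that when Proposition~\ref{propDefFunComp} is applied to build an arbitrary Boolean circuit from these gates, the poly-typabilities compose correctly across intermediate subterms; this is automatic from Proposition~\ref{prop-PolyTypability}, so no further bookkeeping is required.
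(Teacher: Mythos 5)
Your proposal is correct and follows essentially the same route as the paper's own proof: the \emph{not} gate and duplicate via Theorem~\ref{thmStrongTypedBohmTheorem} (the paper instantiates ${\tt B} = {\tt A} \mbox{\tt *} {\tt A}$ directly where you cite Corollary~\ref{corCopyTerm}, which is the same construction), and the \emph{and} gate by currying into Theorem~\ref{thmPolytimeVersionOfStrongTypedBohmTheorem} with the constant-${\tt s1}$ function from Lemma~\ref{lemmaConstantFunctionLemma} and the identity as the two targets. Your explicit check of the cardinality hypothesis of Theorem~\ref{thmPolytimeVersionOfStrongTypedBohmTheorem} is a welcome addition that the paper leaves implicit.
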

\begin{proof}
The representability of the {\it not} gate and the duplicate function are a direct consequence of strong typed B\"{o}hm theorem:
while in the {\it not} gate we choose $\mbox{\tt A}$ as $\mbox{\tt B}$ in Theorem~\ref{thmStrongTypedBohmTheorem}
and ${\tt s2}$ and ${\tt s1}$ as ${\tt u1}$ and ${\tt u2}$ respectively, 
in the duplicate function we choose $\mbox{\tt A} \mbox{\tt *} {\tt A}$ as $\mbox{\tt B}$
and $\mbox{\tt (} \mbox{\tt s1} \mbox{\tt ,} \mbox{\tt s1} \mbox{\tt )}$ and $\mbox{\tt (} \mbox{\tt s2} \mbox{\tt ,} \mbox{\tt s2} \mbox{\tt )}$
as ${\tt u1}$ and ${\tt u2}$ respectively.

On the other hand, by Constant Function Lemma (Lemma~\ref{lemmaConstantFunctionLemma}), 
there is a term ${\tt t}$ with poly-type $\mbox{\tt A} \mbox{\tt ->} \mbox{\tt A}$ that 
represents the constant function $\{ 0 \mapsto 0, \, 1 \mapsto 0 \}$.
Then we choose $\mbox{\tt A} \mbox{\tt ->} \mbox{\tt A}$ as $\mbox{\tt A0} \mbox{\tt ->} \mbox{\tt B}$ in Theorem~\ref{thmPolytimeVersionOfStrongTypedBohmTheorem}
and we choose $\mbox{\tt t}$ and $\mbox{\tt I} = \mbox{\tt fn} \, \, \mbox{\tt x} \mbox{\tt =>} \mbox{\tt x}$ as $\mbox{\tt u1}$ and $\mbox{\tt u2}$ respectively.
Then we get a term $\mbox{\tt t'}$ that represents the {\it and} gate. 
$\Box$
\end{proof}
Appendix~\ref{appFC-BHM} gives a functional completeness proof of $\MBB{B}_{\rm HM}$, which 
is extracted from proofs shown above
and is slightly different from that of \cite{Mai04}.
Note that our construction of functional completeness is not compatible with the polymorphic $\lambda$-calculus by Girard and Reynolds (for example, see \cite{GLT89,Cro94}):
For example, {\tt Not\_HM} can not be typed by 
$\forall \mbox{\tt 'a}. \MBB{B}_{\rm HM} \, \mbox{\tt ->} \, \forall \mbox{\tt 'a}. \MBB{B}_{\rm HM}$.
As far as we know, the only type that is compatible with the polymorphic $\lambda$-calculus is 
$\MBB{B}_{\rm Seq} = \mbox{\tt 'a} \mbox{\tt ->} \mbox{\tt(} \mbox{\tt 'a} \mbox{\tt ->} \mbox{\tt 'a} \mbox{\tt )} \mbox{\tt ->} \mbox{\tt (} \mbox{\tt 'a} \mbox{\tt ->} \mbox{\tt 'a} \mbox{\tt )} \mbox{\tt ->} \mbox{\tt 'a}$.
Appendix~\ref{appFC-BSeq} gives the functional completeness proof of $\MBB{B}_{\rm Seq}$ that is compatible with the polymorphic lambda calculus.
While the encoding derived from our proof of Theorem~\ref{thmFunctionalCompleteness} is not compatible with the calculus,
the modified version given in Appendix~\ref{appFC-BSeq} is compatible. 
It would be interesting to pursue this topic, i.e., whether or not other types are compatible with the polymorphic $\lambda$-calculus.

\section{Concluding Remarks}
With regard to the functional completeness problem of the linear $\lambda$-calculus, Theorem~\ref{thmFunctionalCompleteness} is not the end of the story.
For example, we have already found some better Boolean encodings than that given by Theorem~\ref{thmFunctionalCompleteness} (see Appendix~\ref{appFC-BSeq} and \cite{Mat15}).
We should discuss efficiency of various Boolean encodings in the linear $\lambda$-calculus and 
relationships among them. 
Moreover the extension to $n$-valued cases instead of the $2$-valued Boolean case is open. 
Our result is the first step toward these research directions.

\begin{ack}
  The author thanks an anonymous referee, who pointed out 
  the simplified definition of the relation $\leftrightarrow_{\rm c}$. 
\end{ack}
\nocite{*}
\bibliographystyle{eptcs}
\bibliography{final-eptcs-StrongTypedBohmTheorem}

\appendix
\section{The relationship between linear $\lambda$ terms and IMLL proof nets}
\subsection{Brief Introduction to IMLL proof nets}
\label{appIMLLProofNets}
In this appendix, we introduce IMLL proof nets briefly.
For a complete treatment, for instance see \cite{Mat07}.
\begin{definition}[Plain and signed IMLL formulas]
The plain IMLL formulas are defined in the following grammar:
\[
A ::= p \, | \, A \TENS B \, | \, A \LIMP B
\]
where $p$ is called a propositional variable.
A signed IMLL formula has the form $A^+$ or $A^-$, where $A$ is a plain IMLL formula. 
\end{definition}
\begin{definition}[Links]
A link is an object with a few signed IMLL formulas. 
Any link is any of ID-, $\TENS^+$-, $\TENS^-$-, $\LIMP^+$-, or $\LIMP^-$-link shown in Figure~\ref{figLinks}.
\end{definition}
\begin{figure}[htbp]
\begin{center}
  \includegraphics[scale=0.6]{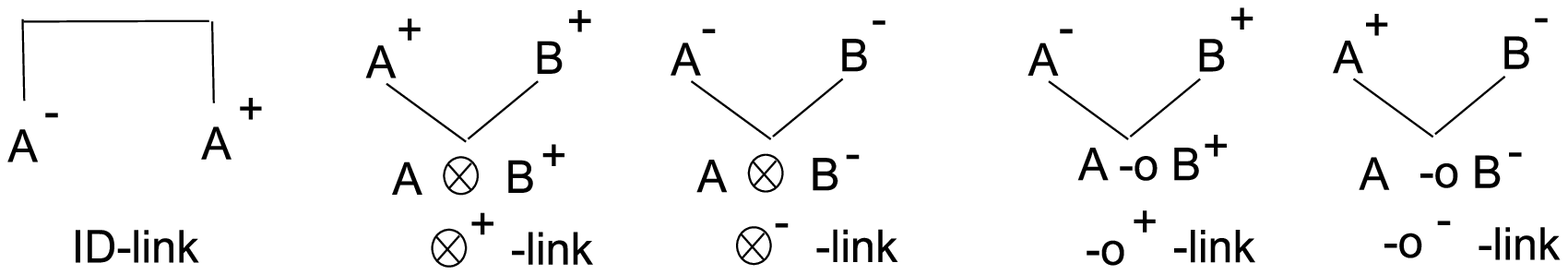}
\end{center}
 \caption{Links}
 \label{figLinks}
\end{figure}
\begin{definition}[IMLL proof nets]
An IMLL proof net is defined inductively as shown in Figure~\ref{figIMLLProofNets}.
\end{definition}
\begin{figure}[htbp]
\begin{center}
  \includegraphics[scale=0.6]{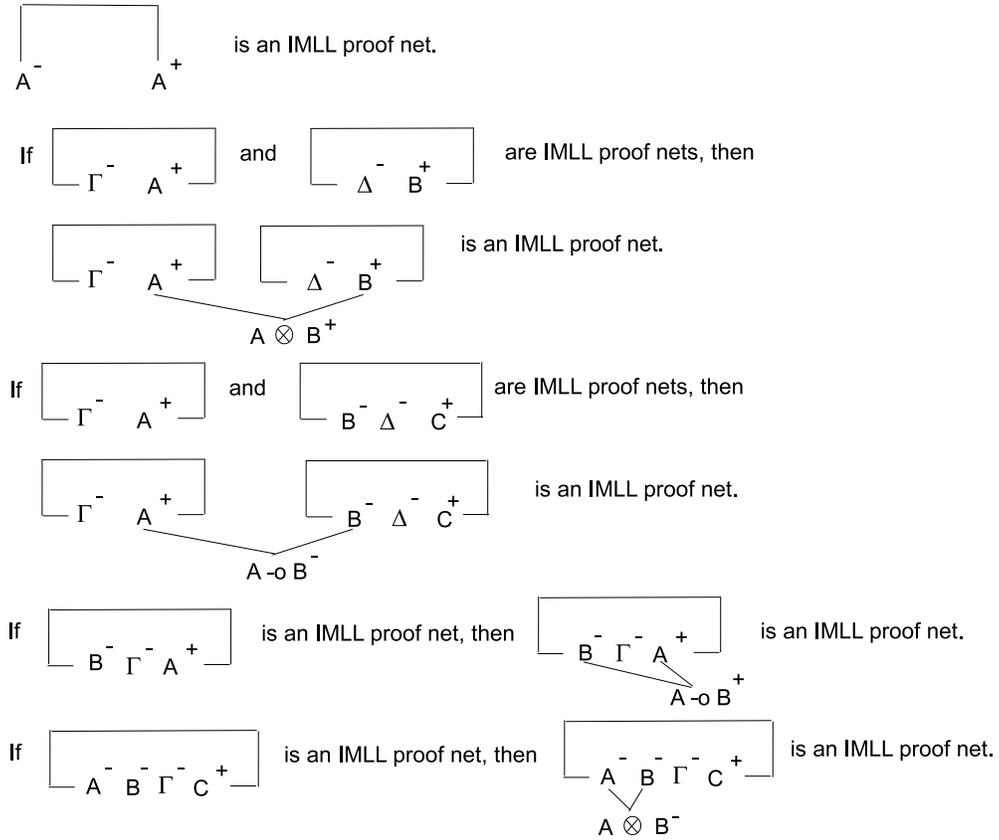}
\end{center}
 \caption{IMLL proof nets}
 \label{figIMLLProofNets}
\end{figure}
\begin{definition}[Reduction rules]
Reduction rules for an IMLL proof net have two kinds:
one is multiplicative shown in Figure~\ref{figIMLLReduction}
and
the other $\eta$ shown in Figure~\ref{figIEtaReduction}.
\end{definition}
\begin{figure}[htbp]
\begin{center}
  \includegraphics[scale=0.6]{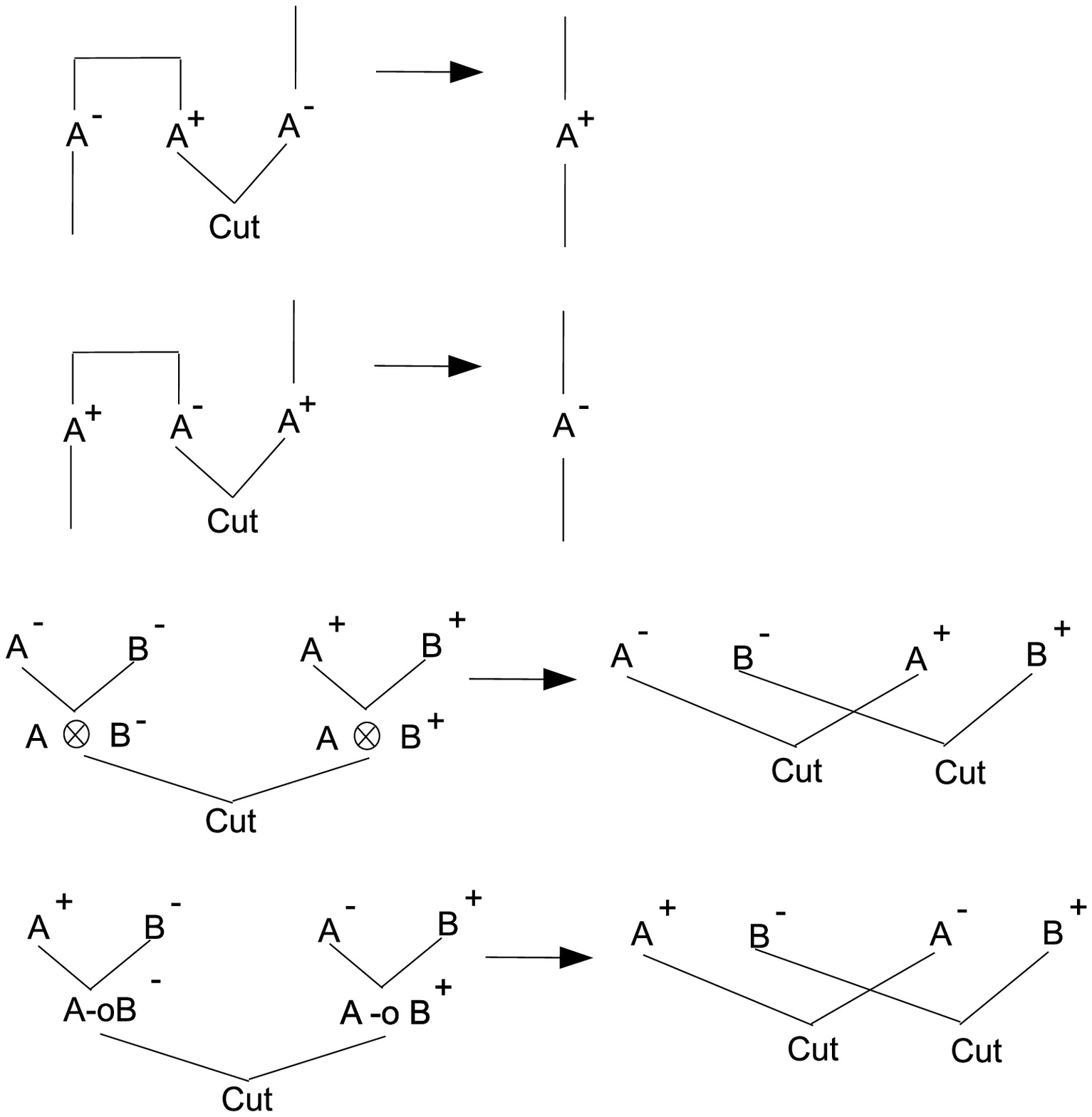}
\end{center}
 \caption{Multiplicative reduction rules}
 \label{figIMLLReduction}
\end{figure}
\begin{figure}[htbp]
\begin{center}
  \includegraphics[scale=0.6]{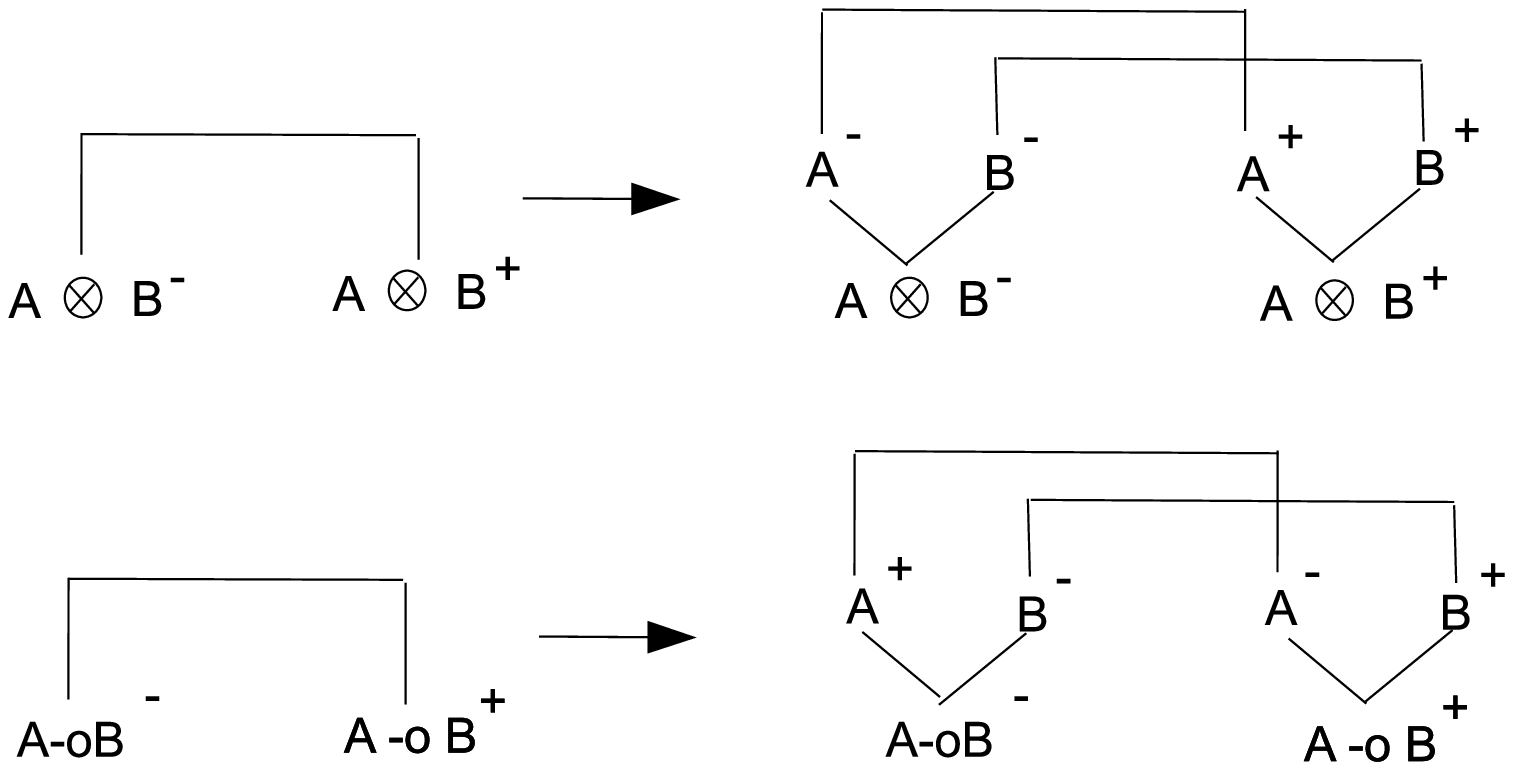}
\end{center}
 \caption{$\eta$ reduction rules}
 \label{figIEtaReduction}
\end{figure}
The reduction relation over IMLL proof nets induced by these reduction rules is strong normalizing and confluent.
So we can obtain a unique normal form of any IMLL proof net.
For two IMLL proof nets $\Theta_1$ and $\Theta_2$,
$\Theta_1$ is equal to $\Theta_2$ (denoted by $\Theta_1 = \Theta_2$)
if there is a bijective map from the signed IMLL formula occurrences in the normal form of $\Theta_1$ to that of $\Theta_2$
such that the map preserves the link structure (for the complete treatment, see \cite{Mat07}). 
\subsection{A full and faithful embedding of the linear $\lambda$-calculus into IMLL proof nets}
First we define our translation $\llbracket - \rrbracket$ of linear $\lambda$-terms into IMLL proof nets by
Figure~\ref{figTransLLC}, where we identify IMLL proof nets up to $=$ defined by Definition 14 in \cite{Mat07}
(or Appendix~\ref{appIMLLProofNets}). 
\begin{figure}[htbp]
\begin{center}
  \includegraphics[scale=0.6]{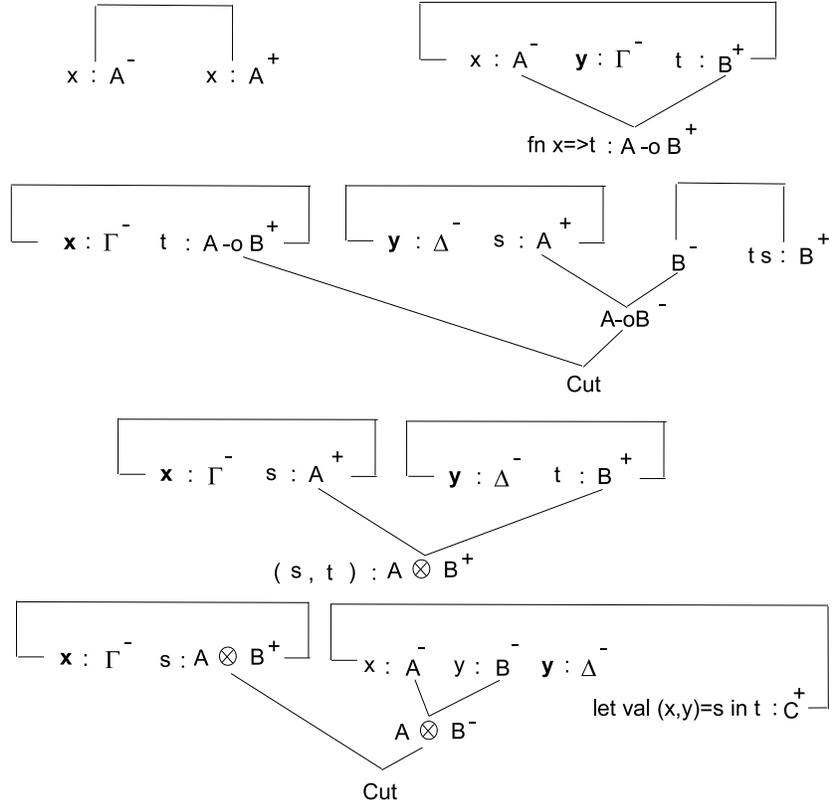}
\end{center}
 \caption{Translation of Linear $\lambda$-Terms into IMLL proof nets}
 \label{figTransLLC}
\end{figure}
Then the following proposition holds.
\begin{proposition}
\label{propTransPreservBetaEta-C}
If $\mbox{\tt t} \to_{\beta \eta {\rm c}} \mbox{\tt t'}$  then, $\llbracket \mbox{\tt t} \rrbracket = \llbracket \mbox{\tt t'} \rrbracket$.
\end{proposition}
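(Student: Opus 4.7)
The plan is to unfold the definition of $\to_{\beta\eta{\rm c}}$ and split the statement into two simpler pieces. By the single rule defining this relation, $\mbox{\tt t} \to_{\beta\eta{\rm c}} \mbox{\tt t'}$ means there exist $\mbox{\tt t}_0$, $\mbox{\tt t}_0'$ with $\mbox{\tt t} =_{\rm c} \mbox{\tt t}_0 \to_{\beta\eta} \mbox{\tt t}_0' =_{\rm c} \mbox{\tt t'}$. It therefore suffices to establish two auxiliary facts:
(i) if $\mbox{\tt t} \leftrightarrow_{\rm c} \mbox{\tt t'}$ then $\llbracket \mbox{\tt t} \rrbracket = \llbracket \mbox{\tt t'} \rrbracket$, and
(ii) if $\mbox{\tt t} \to_{\beta\eta} \mbox{\tt t'}$ then $\llbracket \mbox{\tt t} \rrbracket = \llbracket \mbox{\tt t'} \rrbracket$.
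Closure under arbitrary contexts $C[]$ would then follow in both cases from the fact that the translation $\llbracket - \rrbracket$ in Figure~\ref{figTransLLC} is defined compositionally, so that a local rewrite of a subterm yields a local rewrite of the corresponding sub-net.

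For (ii), the plan is to observe that each of the four source-level rewrite rules corresponds to one rewrite step on proof nets: $\beta_1$ translates a $\lambda$-redex into an $\LIMP^+/\LIMP^-$ cut, $\beta_2$ translates a $\mbox{\tt let val (x,y)=(u,v) in w end}$ redex into a $\TENS^+/\TENS^-$ cut, and these two cuts reduce by the multiplicative rewrite rules of Figure~\ref{figIMLLReduction}; the two $\eta$-rules $\eta_1$ and $\eta_2$ translate directly into the two $\eta$-redexes of Figure~\ref{figIEtaReduction}. Because IMLL proof-net equality $=$ is defined as equality of (strongly normalizing, confluent) normal forms, any two proof nets connected by a sequence of these rewrites are equal, so $\llbracket \mbox{\tt t} \rrbracket = \llbracket \mbox{\tt t'} \rrbracket$.

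For (i), the key observation is that the translation of $\mbox{\tt let val (x,y)=s in u end}$ puts a $\TENS^-$-link whose premise is the root of $\llbracket \mbox{\tt s} \rrbracket$ and whose two conclusions are wired to the two free-variable ports of $\llbracket \mbox{\tt u} \rrbracket$ labelled $\mbox{\tt x}$ and $\mbox{\tt y}$. The commutative conversion rule only slides this $\TENS^-$-link across a context $C[]$; the side conditions ${\rm FV}(C[]) \cap \{\mbox{\tt x},\mbox{\tt y}\} = \emptyset$ and ${\rm CV}(C[]) \cap {\rm FV}(\mbox{\tt t}) = \emptyset$ ensure that no wiring is rerouted through a different binder. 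I would verify case by case, for each of the seven context shapes in the grammar of $C[]$, that the two translated nets are literally the same graph after the identification $=$ of Definition 14 of \cite{Mat07}: the $\TENS^-$-link is connected to the same premise and the same two conclusions on either side of $\leftrightarrow_{\rm c}$.

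The hard part will be (i), because it is the only place where the equality on IMLL proof nets has to do real work beyond ``reduce both sides to the normal form''. The verification requires being pedantic about the inductive clauses of $\llbracket - \rrbracket$ and about what counts as ``the same'' proof net; in particular, one must check that sliding the $\TENS^-$-link past a binder in $C[]$ (the $\mbox{\tt fn x=>}$ and $\mbox{\tt let}$ context clauses) does not accidentally disturb connectivity, which is precisely what the two side conditions in the definition of $\leftrightarrow_{\rm c}$ guarantee. Once this bookkeeping is done, (ii) and the closure under contexts are routine inductions on term structure.
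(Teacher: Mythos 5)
Your proposal is correct and follows essentially the same route as the paper's proof: the paper likewise splits the statement into the $\beta$, $\eta$, and $\leftrightarrow_{\rm c}$ cases, handles the first two by translating the redexes into proof-net redexes (Figures~\ref{figBetaRedexes} and~\ref{figEtaRedexes}) and appealing to normalization, and disposes of the commutative conversion by observing that both sides translate to the same net. Your treatment is in fact more explicit than the paper's, particularly in spelling out why the side conditions on $\leftrightarrow_{\rm c}$ guarantee that the $\TENS^-$-link's wiring is undisturbed.
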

\begin{proof}
When $\mbox{\tt t} \to_{\beta} \mbox{\tt t'}$, Figure~\ref{figBetaRedexes} proves the proposition.
When $\mbox{\tt t} \to_{\eta} \mbox{\tt t'}$, we consider Figure~\ref{figEtaRedexes}.
In Figure~\ref{figEtaRedexes},  we normalize IMLL proof nets $\Theta$ and $\Pi$. 
Then the proposition should be obvious. 
When $\mbox{\tt t} \leftrightarrow_{\rm c} \mbox{\tt  t'}$, $\mbox{\tt t}$ and $\mbox{\tt t'}$ are translated into the same IMLL proof net in each case.
$\Box$
\end{proof}
\begin{figure}[htbp]
\begin{center}
  \includegraphics[scale=0.6]{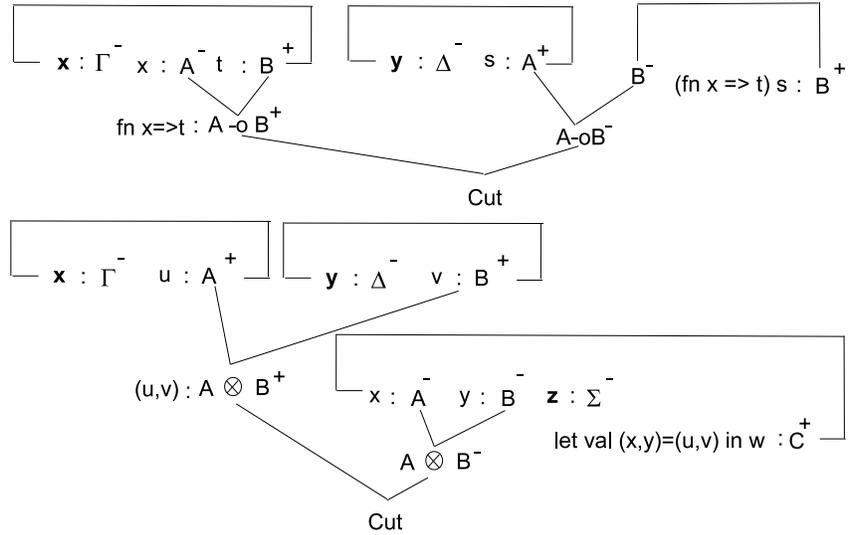}
\end{center}
 \caption{Translation of $\beta$-redexes}
 \label{figBetaRedexes}
\end{figure}
\begin{figure}[htbp]
\begin{center}
  \includegraphics[scale=0.6]{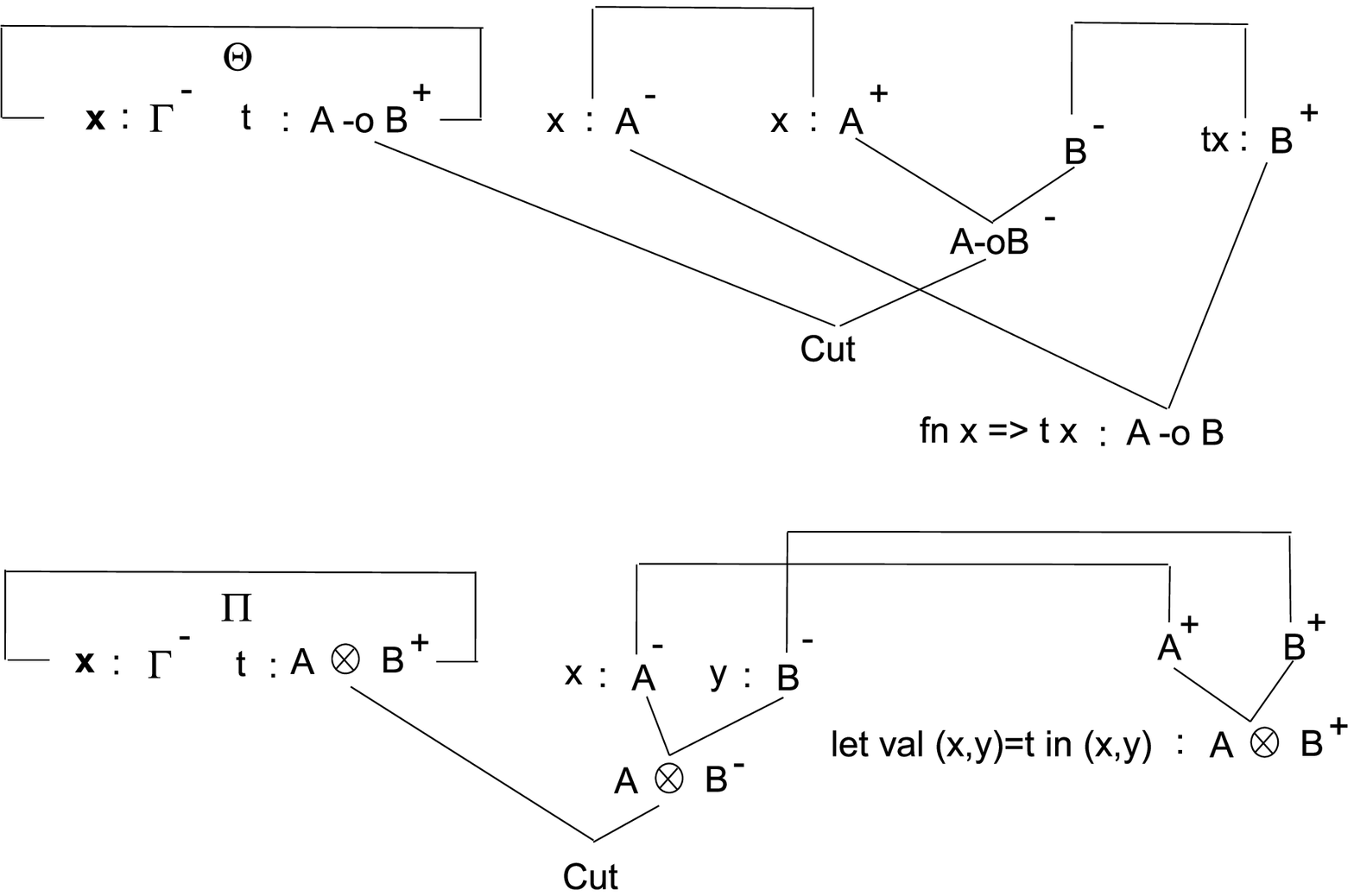}
\end{center}
 \caption{Translation of $\eta$-redexes}
 \label{figEtaRedexes}
\end{figure}
Moreover if both $\mbox{\tt t}$ and $\mbox{\tt t'}$ are normal forms of linear $\lambda$-terms with regard to $\to_{\beta \eta {\rm c}}$, then
when $\neg (\mbox{\tt t} =_{\beta \eta {\rm c}} \mbox{\tt t'})$, it is obvious that $\llbracket \mbox{\tt t} \rrbracket \neq \llbracket \mbox{\tt t'} \rrbracket$.
So we have established the faithfulness. 
On the other hand, for any IMLL proof net $\Theta$ whose conclusion is a type of the linear $\lambda$-calculus, 
it is easy to show that there is a linear $\lambda$-term $\mbox{\tt t}$ such that $\llbracket \mbox{\tt t} \rrbracket = \Theta$.
So we have established the fullness. 
Therefore we conclude the existence of a full and faithful embedding stated above.
So we can identify a normal linear $\lambda$-term with the corresponding normal IMLL proof net. 
We treat $=_{\beta \eta {\rm c}}$ as the legitimate equality of linear $\lambda$-terms. 
Note that while $\eta$-normal forms are natural in the linear $\lambda$-calculus, 
$\eta$-long normal forms are natural in the proof net formalism. 

\section{Why Need Poly-Types?}
\label{secWhyNeedPoly-Types}
In this appendix, we show that the method of \cite{Mat07} can not be extended without poly-types.

We let
$\MBB{B}_{\rm HM} 
= 
\mbox{\tt 'a} \mbox{\tt ->} \mbox{\tt 'a} \mbox{\tt ->} 
\mbox{\tt (} \mbox{\tt 'a} \mbox{\tt ->} \mbox{\tt 'a} \mbox{\tt ->} \mbox{\tt 'a} \mbox{\tt )}
\mbox{\tt ->} \mbox{\tt 'a}$
and 
$\MBB{B}_{\rm Seq} = \mbox{\tt 'a} \mbox{\tt ->} \mbox{\tt(} \mbox{\tt 'a} \mbox{\tt ->} \mbox{\tt 'a} \mbox{\tt )} \mbox{\tt ->} \mbox{\tt (} \mbox{\tt 'a} \mbox{\tt ->} \mbox{\tt 'a} \mbox{\tt )} \mbox{\tt ->} \mbox{\tt 'a}$
and \\
{\tt fun True x y z = z x y;} \\
{\tt fun False x y z = z y x;} \\
{\tt fun TrSeq x f g = g (f x);} \\
{\tt fun FlSeq x f g = f (g x);} \\
The terms {\tt True} and {\tt False} are closed terms of $\MBB{B}_{\rm HM}$
and {\tt TrSeq} and {\tt FlSeq} are that of $\MBB{B}_{\rm Seq}$.
Then we show that
for any type {\tt A}, we cannot find a closed term {\tt s} of type
$\MBB{B}_{\rm Seq}[\mbox{\tt A}/\mbox{\tt 'a}] \mbox{\tt ->} \MBB{B}_{\rm HM}$
such that
\[
\mbox{\tt s} \, \, \mbox{\tt TrSeq} =_{\beta \eta {\rm c}} \mbox{\tt True}
\quad \mbox{and}
\quad
\mbox{\tt s} \, \, \mbox{\tt FlSeq} =_{\beta \eta {\rm c}} \mbox{\tt False} \, \, .
\]
We suppose that there is such a closed term {\tt s}.
Then ${\tt A}$ must be $\MBB{B}_{\rm HM}$.
Moreover there must be  closed terms {\tt f} and {\tt g} of type
$\MBB{B}_{\rm HM} \mbox{\tt ->} \MBB{B}_{\rm HM}$ such that
\[
\mbox{\tt f} \mbox{\tt (} \mbox{\tt g} \, \mbox{\tt t} \mbox{\tt )}
=_{\beta \eta {\rm c}} {\tt True}
\quad \mbox{and}
\quad
\mbox{\tt g} \mbox{\tt (} \mbox{\tt f} \, \mbox{\tt t} \mbox{\tt )}
=_{\beta \eta {\rm c}} {\tt False}
\]
where {\tt t} is {\tt True} or {\tt False}.
But {\tt f} and {\tt g} must be {\it identity} or {\it not gate},
because $\MBB{B}_{\rm HM} \mbox{\tt ->} \MBB{B}_{\rm HM}$ does not allow any constant functions.
This is impossible. 

\section{Functional Completeness of $\MBB{B}_{\rm HM}$}
\label{appFC-BHM}
The terms {\tt Not\_HM}, {\tt Copy\_HM}, \, {\tt And\_HM} below are derived from our construction.\\
{\tt fun True x y z = z x y;} \\
{\tt fun False x y z = z y x;} \\
{\tt fun I x = x;} \\
{\tt fun u\_2 x1 x2 = x1 (x2 I);} \\
{\tt fun u\_3 x1 x2 x3 = x1 (x2 (x3 I));}\\
{\tt fun proj\_1 x1 x2 = x2 I I u\_2 x1;}\\
{\tt fun Not\_HM x = x False True proj\_1;} \\
{\tt fun LDTr\_Pair p x y f z w h l}\\
\ \ \ \ \ \ \ \ \ \ \ {\tt = let val (u,v) = p in l (u x y f) (v z w h) end;} \\
{\tt fun proj\_Pair\_1 x1 x2  = LDTr\_Pair x2 I I u\_2 I I u\_2 u\_2 x1;} \\
{\tt fun Copy\_HM x = x (True,True) (False,False) proj\_Pair\_1;}\\
{\tt fun const\_F x = x I I (u\_2) False;}\\
{\tt fun And\_HM x y = let val (u,v) = Copy\_HM y in} \\
\ \ \ \ \ \ \ \ \ \ \ \ \ \ \ \ \ \ \ \ \ \ \ \ \ \ \ \ \ \ \ \ \ {\tt x (I u) (const\_F v) proj\_1 end;}

\section{Functional Completeness of $\MBB{B}_{\rm seq}$}
\label{appFC-BSeq}
The terms {\tt NotSeq}, {\tt CopySeq}, {\tt AndSeq} below are compatible with the polymorphic lambda calculus of Girard-Reynolds. \\
{\tt fun TrSeq x f g = g (f x);} \\
{\tt fun FlSeq x f g = f (g x);} \\
{\tt fun NotSeq h x f g = h x g f;} \\
{\tt fun constTr h x f g = g (f (h x I I));} \\
{\tt fun conv h z = let val (f,g) = h in let val (x,y) = z} \\
\ \ \ \ \ {\tt in (f x,g y) end end;}\\
{\tt fun CopySeq x = } \\
\ \ \ \ {\tt x (TrSeq,TrSeq) (conv (NotSeq,NotSeq)) (conv (constTr,constTr));} \\
{\tt fun constFlFun h k x f g = f (g (k (h FlSeq x I I) I I));}\\
{\tt fun idFun h k x f g = k (h TrSeq x I I) f g;}\\
{\tt fun AndSeq x = x I constFlFun idFun;} 

\end{document}